\newcommand{\reals}{\mathbb{R}}
\newcommand{\cons}{\mathtt{Constraints}}
\newcommand{\prior}{\mathtt{Priorities}}
\newcommand{\deadline}{\mathtt{Deadlines}}
\newcommand{\earliest}{\mathtt{Earliest}}
\newcommand{\sched}{\mathtt{Schedule}}
\newcommand{\naturals}{\mathbb{N}}
\newcommand{\pushback}{\mathtt{PushBack}}
\newtheorem{definition}{Definition}[section]
\newtheorem{heuristic}{Heuristic}[section]
\newtheorem{lemma}{Lemma}[section]
\newtheorem{theorem}{Theorem}[section]
\newtheorem{assumption}{Assumption}[section]
\newtheorem{example}{Example}[section]
\newcommand{\kibitz}[2]{\ifnum\Comments=0\textcolor{#1}{#2}\fi}
\begin{document}

\title{Real Time Operation of High-Capacity Electric Vehicle Ridesharing Fleets}
\author{
    Matthew Zalesak \\
    School of Operations Research and Information Engineering \\
    Cornell University \\
    Ithaca, NY 14853 \\
    \texttt{mdz32@cornell.edu}\\
    \And
    Samitha Samaranayake \\
    School of Civil and Environmental Engineering \\
    Cornell University \\
    Ithaca, NY 14853 \\
    \texttt{samitha@cornell.edu}
}

\maketitle

\begin{abstract}
We study the feasibility of using electric vehicles in online, high-capacity ridepooling systems.  Prior work has shown that online algorithms perform well for centrally-controlled, high-capacity ridepool systems.  First, we propose a mixed integer linear program to expand past algorithms on ridepooling to electric vehicles fleets with the objective of scheduling vehicle charging to maintain sufficient fleet sizes at various times of day.  Then we show a faster, scalable algorithm with similar performance that is practical for full-scale systems.  Our contributions show the importance of having knowledge and estimates of future demand even when operating in the online setting.
\end{abstract}

\section{Introduction}

Recent innovations in communications technology and mobile devices have enabled the emergence of large-scale ridehailing\footnote{We use the term \emph{ridesharing} to refer to any system that allows riders to travel in vehicles not owned by themselves.  For an explanation of terminology used in this paper, refer to section \ref{sec:terminology}.} services which provide on-demand, door-to-door transportation services.  These services are attractive to users as they provide alternatives to the cost and hassle of finding parking, overcome the the problems of slow or unreliable transit options, and offer service when no other transit options are available \cite{clewlow2017disruptive}.  These new services are possible as private operators begin to take advantage of new mobile technologies and methodologies to match riders and drivers together with higher speed and efficiency than has been done before.  Cities also stand to benefit from ridehailing services with the hopes that they may act as an extension to public transportation, reduce congestion, and have a positive impact on pollution.   

While private ridehailing services have made a debut in many countries around the world, in many cities these services have actually been found to contribute to higher congestion \cite{erhardt2019transportation} and green house gas emissions \cite{rodier2018effects}.  One cause for this is asymmetry in demand patterns causing local mismatches between supply and demand, leading many vehicles in ridehailing services to deadhead (also known as rebalancing) in order to relocate to other areas \cite{han2017quantification}.  This process increases the number of vehicle miles traveled required to serve the traveling population compared to the same users driving in their own vehicles.  A solution to this problem that has received increased attention recently is high-capacity ridepooling, where a large number passengers can share a trip in a single vehicle in order to increase the system's efficiency (ratio of passenger miles traveled to vehicle miles traveled).  An ever growing area of research, shared ridepooling has the potential to offer mass mobility as a compliment to existing public transit options.  As a consequence of serving riders using fewer vehicle miles traveled, shared ridepooling has the potential to reduce roadway congestion and reduce pressure in municipal parking areas \cite{viegas2016shared}.

In addition to using shared ridepooling as a tool to reduce pollution, total greenhouse gas emissions can be further reduced by operating ridepool systems with fleets of hybrid or electric vehicles rather than conventional internal combustion engine (ICE) vehicles, with most of the benefits of electric vehicles in particular depending on the availability of green energy sources~\cite{onat2015conventional}.  Not only do electric vehicles help reduce smog and other pollutants in cities, empirically they have been found to be more energy efficient in urban settings due to the ability of regenerative braking to reclaim energy.  With ongoing research on autonomous vehicles making large autonomous electric fleets a possibility, it is necessary that algorithms for ridepooling systems adapt to handle the new constraints introduced by electric fleets.

Past research on ridesharing systems has generally focused on either large-scale, high-capacity ridepooling or smaller scale and lower-capacity ridehail systems with electric vehicles, but few have focused on both.  Our contribution with this work is a model and accompanying methods for the control and operation of large-scale, high-capacity ridepool systems with a fleet comprised of electric vehicles\footnote{Our approach can handle the case of mixed fleets by setting the battery capacity of ICE vehicles to be infinite or explicitly modeling refueling if necessary.}.  Unlike models that have assumed the use of ICE vehicles, electric vehicle fleets must be able to go out of service to charge mid-day to maintain quality levels of service.  Even as maximum battery ranges continue to grow in higher end cars, the actual number of miles that electric vehicles can travel can be significantly affected by outside factors such as temperature \cite{lajunen2018evaluation}.  In addition, time dependent consumption of power from air and auxiliary systems mixed with slow travel in congested city roadways can negatively impact range~\cite{restrepo2014performance}.

The paper is organized as follows.  Section \ref{sec:litreview} reviews the literature on ridehailing and ridepooling algorithms as well as studies concerning electric vehicles.  Section \ref{sec:prelim} formally defines the setting of the problem and reviews an algorithm for ridepool management that we extend for electric vehicles.  Section \ref{sec:methods} presents the formulation of the charge scheduling subproblem and proposes two solution methods.  Finally, section \ref{sec:results} presents the results of simulations of large-scale ridepool systems and compares the performance of the proposed solutions with benchmark and ICE vehicle baselines.

\subsection{Literature Review}
\label{sec:litreview}

Literature on operational algorithms for high-capacity ridepooling and management of electric vehicles in ridepooling systems has largely been separate.  Much of the work on the management of large-scale on-demand ridepooling assumes the use of conventional internal combustion engine (ICE) vehicles.

Many early ideas for large-scale ridepooling employed greedy assignment algorithms.  This group of methodologies consider requests sequentially, usually in chronological order, and assigns them to vehicles in a way that minimizes some criteria subject to constraints.  \cite{ma2013t} presented T-Share, a framework that manages a real-time taxi system where multiple users can share a ride.  The system sequentially handles incoming travel requests by matching each request with the vehicle that would have to travel the least additional distance to serve them.  The focus is on how to efficiently compute this greedy assignment.  In follow up work, \cite{ma2014real} demonstrated simulations of the system with over 7000 taxis on a map in Beijing.  To measure the performance of the system, they introduce the taxi-share rate and seat occupancy rate statistics, which other works use as well, to benchmark the performance gains of ride-sharing algorithms.  The taxi-share rate is the percentage of requests that share a ride while the seat occupancy rate is the ratio of passenger-minutes to the total number of available seat-minutes in the system.

\cite{ota2016stars} proposed the STaRS simulation framework which, similar to T-share, incorporates quality of service requirements limiting the maximum amount of waiting time for customers that are served as well as the total amount of delay that can be incurred due to sharing rides.  Requests arrive into the system online and are processed sequentially, each request is greedily assigned to the best vehicle, and vehicles are not rebalanced at the end of each iteration.

Another line of methodologies instead rely on batching requests over a time period and processing requests simultaneously.  Many of these methods aim for exhaustive searches over the space of vehicle and request matches during each batch.  
Starting this line of work, \cite{santi2014quantifying} introduced the notion of a shareability network.  The shareability network is a graph that contains a node for every request in the system.  If two requests are mutually shareable under quality of service requirement such as maximum waiting time and detour time, an edge is drawn between them.  For groups of 3 or more requests, shareability can be represented by checking group feasibility and then creating hyperedges connecting the request nodes.

\cite{alonso2017demand} built upon the concept of shareability network in a batch-optimization framework.  In this framework, requests arrive to the system in batches.  A shareability network is produced on the set of requests that arrived in the batch.  Unlike the work of \cite{santi2014quantifying}, it also includes nodes representing the vehicles in the shareability network.  By modifying the definition of a trip to be a collection of passengers along with a vehicle, every feasible shareable trip forms a clique in the shareability network including exactly one vehicle.  By fixing a particular vehicle and group of requests it is possible to compute optimal routing, costs, and revenues associated with various assignment and routing decisions.  Using the results of the extended shareability network, the set of trips and costs are used as inputs to integer linear program.  The result of decomposing the problem into trip generation and selection preserves optimality in the solution up to computational limits that apply at each step.  The results is a system that makes routing decisions that are optimal at all times given request currently in the system.

Other papers propose ideas that straddle the space between greedy assignment and batching.  A variant proposed by \cite{simonetto2019real} allows for significant speed gains at the cost of optimality.  To avoid searching for cliques in a sharability graph, they restrict the definition of a trip to be only a vehicle and a single request.  At each time step, the optimization only requires solving a linear assignment problem.  Once a passenger is assigned to a vehicle, they cannot be reassigned to a different vehicle later even if the new matching would lead to better performance.  In addition, \cite{lowalekar2019zac} consider decomposing the problem into paths rather than into trips which allows for faster computation at the cost of exact constraint evaluation.

Work on electric vehicle systems address two main challenges: where to locate charging infrastructure and how to operate fleets of electric vehicles in real time.

On the front of locating charging infrastructure, most researchers propose two phase procedures where a first stage simulates a system initialized without charging infrastructure, and then use these results to make decisions about locating the charging infrastructure via various methods. For example, the works of \cite{chen2016operations}, \cite{farhan2018impact}, and \cite{loeb2018shared} simulate a ridehail system with electric vehicles with a first phase that dynamically creates charge stations at the location of each vehicle if the vehicle is running too low on charge and no previously created charge station is in range. Using a different approach, \cite{zhang2020charging} instead saves the locations of the charge requests when vehicles report they are low on charge and then solves a K-means problem to locate the charging stations.  Going beyond the use of simulations to provide the two phases, \cite{kohani2017location} studies where to place charging stations by reviewing GPS data for vehicle trajectories from real world systems and using where they tend to park and spend time idling to guide station location decisions.

Some researchers have studied charge station location outside of the context of a two-phase simulation.  For example,  \cite{lam2014electric} looks to choose the best location for charge stations out of a set of candidates by solving a MIP model that minimizes the cost of constructing charge stations subject to coverage constraints.  Similarly, \cite{lee2019shared} chooses charge station locations from a set of candidates by solving a p-median problem minimizing the average distance from nodes in the map to charge stations.  

On the electric vehicle operations side of the problem, many studies have focused on simple myopic policies \cite{fehn2019modeling, bauer2019electrifying, li2019agent, bongiovanni2018two} while others have attempted to incorporate planning for future demand \cite{al2020approximate, shi2019operating, kullman2020dynamic, iacobucci2019optimization}, though these methods do not necessarily scale to operational size.

Many prior works have used myopic threshold charging policies or policies activated by special events.  These policies allow the vehicles to operate without regard to future charging needs until vehicle charge goes below a threshold value or the range is low enough that it cannot reach a charge station after serving a possible next request.  The method of \cite{fehn2019modeling} assigns passengers to the nearest vehicle and charges the vehicles when they get below 25\%, or a different threshold at some times of day based on the cost of electricity.  In a similar vein, \cite{bauer2019electrifying} also uses a greedy passenger assignment algorithm and a greedy algorithm to assign vehicles to charge stations when they need to charge.  Using a slightly different mechanism, \cite{li2019agent} implements charging as part of a rebalancing process where the vehicle charges when it cannot reach its rebalancing target with its current level of charge.  Planning charging as part of a vehicle's travel assignment, \cite{bongiovanni2018two} sequentially processes requests that arrive to the system and attempts to insert them into the vehicle's schedule, with strategies to add, extend, or modify the vehicle's charge station visits if necessary.

Some studies have used methods that try to be aware of the future, either directly or indirectly.  
One approach is to use approximate dynamic programming (ADP), such as \cite{al2020approximate} which uses ADP to determine when vehicles get new passengers and whether vehicles should charge.  In \cite{shi2019operating} and \cite{kullman2020dynamic}, deep reinforcement learning is used to develop policies for vehicles to determine when to accept new customers and when to charge.  They suggest that the learning process allows the system to anticipate future demand.  In \cite{iacobucci2019optimization}, two model predictive control optimization algorithms are used to both solve for routing and charging decisions, demonstrating their method on up to 30 vehicles.

A few studies have used other methods that were applied at small scale, such as \cite{shi2018routing} which controls 4 single capacity vehicles and determines assignments and charging times using a MIP model and \cite{pettit2019increasing} which looks for a policy to operate a single electric vehicle using reinforcement learning.

In this work we will derive methods from the batching, exhaustive search framework of \cite{alonso2017demand} to show performance under optimal routing with high-capacity electric vehicles is possible in real time without having to use a myopic charging strategy.  Nothing we present prevents adapting the underlying fleet assignment and routing algorithm to other methods in the literature as our electric vehicle additions are modular in this respect.

\section{Preliminaries}
\label{sec:prelim}

\subsection{Terminology}
\label{sec:terminology}

The area of shared transportation does not, unfortunately, have universally standardized terminology and so we briefly explain our usage in this paper.  We use the term \emph{ridesharing} to refer to any system that allows riders to travel in vehicles not owned by themselves.  This includes carpooling systems where the driver of the vehicle is traveling to the same or similar destination as the passengers as well as settings where the driver is strictly offering services to passengers.  When we wish to draw attention to the fact that the driver of a vehicle is working as a driver rather than making personal travel, we call it a \emph{ridehailing} system.  When we wish to draw attention to the fact that multiple passengers can travel in a vehicle simultaneously, we call it a \emph{ridepool} system.

\subsection{The Problem}

In the general on-demand ridepool problem, the inputs given are a graph $G = (N, A)$ representing a road network, a set of vehicles $V$ with possible heterogeneous capacities that are centrally controlled by the operator, and a set of passengers $R$ that arrive online.  Each passenger request $r \in R$ is specified by a system entry time, an origin, and a destination.

The system is constrained to provide a set of minimum quality of service (QoS) requirements to all requests that are accepted into the system.  These requirements are general, can be request specific, and can be chosen differently by each operator.  For example, an operator may impose a maximum waiting time and a maximum detour constraint, ensuring that all accepted passengers are picked up within $t_w^{\max}$ time of submitting their requests and that the extra time traveling in the vehicle due to a shared ride is no more than $t_s^{\max}$.  An operator may opt to choose settings such as $t_w^{\max} = 5$ minutes and $t_s^{\max} = 15$ minutes.

If a request cannot be served using a trip that meets the QoS requirements, it must be rejected.  The feasibility of a trip is determined by solving a constrained traveling salesman problem, where among other constraints we must consider the capacity of the vehicle at each point in time.  Thus, matching a passenger to a vehicle in a particular batch does not require that the passenger will actually board the vehicle during that batch or that the vehicle has capacity for the new passenger at that precise moment in time.  In deterministic settings\footnote{deterministic travel times and no dropouts}, since we gain no new information over time that reduces the set of constraints in the system (i.e., no information that allows for previously infeasible pickups to be feasible) if a request is rejected in one batch, it will be infeasible to find a trip to serve them in later batches as well.

In our problem we assume that vehicles are electric and identical with battery properties that will be discussed in section \ref{subsec:battery}.  A subset of the nodes in the graph are designated as charge stations, $S \subset N$.  Each charge station $s \in S$ has a maximum capacity $K_s$ denoting the maximum number of vehicles that can simultaneously charge at $s$.  The charge stations are private and are for the exclusive use of the operator; no consideration is given to others using the stations.

Operationally, the system is different from typical ridepool systems since 1) electric vehicles need to go out of service during the middle of the day to charge and 2) charging electric vehicles takes significantly longer than filling an internal combustion engine vehicle with gasoline.\footnote{Operationally, refueling does not have a large impact on the performance of ICE vehicles since they usually have a long fuel range, can refuel in a short time, and have a large number of options for getting fuel.} It is possible that with poor planning many of the charge stations could become congested and not be able to serve all the vehicles that are assigned to charge at them.  If it is not possible for a vehicle to continue operation due to low charge, it may go to a charge station and wait.  When vehicles charge, we assume they will always charge up to level $q_{\max}$, which may or may not be the vehicle battery's theoretical maximum capacity.

Though the system operates online, without a prediction of future demand we cannot make meaningful decisions about when to charge vehicles, especially since some may need to be preemptively charged.  For example, during rush hour (when demand is at its highest) it would be unwise to maximally utilize the charging infrastructure.  To this extent, we allow the vehicle operator to create, in advance, a requirement function $R(t)$ for each time $t$ designating the number of vehicles that must be online rather than charging.  The operator can compute this function in any way, be it from demand quantity alone, value of customers at certain times of day, etc.  It can also be created in more complicated ways such as considering the cost of electricity by time of day and making the requirement function so as to limit vehicles being offline and charging during the most expensive times of day.

The computation of $R(t)$ may hide a great deal of work itself.  However, not only are these values computed offline, these values can involve a large number of practical considerations from the operator that are beyond the scope of this paper.  In this work we limit the construction of $R(t)$ to simple functions of expected demand, as discussed in section \ref{subsec:availability}.

The objective of the operator is to assign requests to vehicles and assign routes to vehicles to maximize some function of the assignments.  If the function is linear in the selection of request and vehicle assignment decisions, this leads to an ILP formulation for the problem.

\subsection{Trip-Oriented ridepooling Framework}

The underlying passenger assignment framework used in this work is based on the trip-oriented formulation of \cite{alonso2017demand}.  The online arrival of requests is split into batches of uniform length and computations are performed once for every batch.  For any vehicle $v \in V$ and any set of requests $r_b \subseteq R_b$ all in the same batch, a trip $t = (v, r_b)$ is a pairing of the vehicle and the requests.  For any given trip, an optimal route for the vehicle and associated cost $c_t$ can be found by solving a constrained traveling salesman problem (CTSP).  The method used in \cite{alonso2017demand} for trip generation is very general and allows for the insertion of additional constraints, something we will take advantage of as we extend the problem to electric vehicles.

The formulation of \cite{alonso2017demand} is presented as an optimization formulation that first assigns a large penalty for requests that are not assigned and second penalizes the cost of serving the assigned trips.  They use a binary variable $\epsilon_r$ to track which requests are rejected and then apply a large penalty, $M$.  For the secondary objective, the cost $c_t$ is left unchanged.  The optimization problem remains general, however, since one could choose $M = 0$ and then use an arbitrary function to assign cost $c_t$, possibly negative, for each trip.  The operator's assignment problem in a non-EV ridepool system would thus be the solution of
\begin{align*}
    \min_x & \sum_{t \in Trips} c_t x_t + M \sum_{r} \epsilon_r \\
    \text{subject to} 
    & \sum_{\substack{t \in Trips : \\ v \in t}} x_t \leq 1 & \forall v \in V \\
    & \sum_{\substack{t \in Trips : \\ r \in t}} x_t + \epsilon_r = 1 & \forall r \in R_b \\
    & x \in \{0, 1\}^{|Trips|}, \epsilon \in \{0, 1\}^{|R_b|}
\end{align*}
where here $Trips$ is the set of feasible trips and the solution is constrained to assign at most one trip to each vehicle and choose at most one trip containing each request. Notice that this relies on full enumeration of all feasible trips.  The authors of \cite{alonso2017demand} observe that when the quality of service requirements are tight, as is typical in ridepooling applications, the set of feasible trips is small enough to approximately enumerate in real-time.

Trips are found using a shareability graph, a concept introduced in 
\cite{santi2014quantifying}.  A shareability graph, as extended by \cite{alonso2017demand}, is an undirected graph $G = (N, E)$ in which the node set $N = V \cup R$ is comprised of nodes for each vehicle and request.  The edge set $E$ contains all pairs $(v, r)$, $v \in V, r \in R$ if vehicle $v$ can serve $r$ and all its current passengers while satisfying the quality of service constraints.  $E$ also contains all pairs $r_1, r_2 \in R$ if it is possible for an ideal hypothetical vehicle to serve both requests (for example, if the requests are 5 minute apart and the maximum waiting time is 2 minutes even an ideally located vehicle cannot serve both).

Every feasible trip induces a clique in the shareability graph containing exactly one vehicle node.  This is the case since the edge set was constructed from necessary conditions for the feasibility of a trip.  While the general problem of finding cliques is hard, \cite{alonso2017demand} proposes a method for searching for the cliques in the shareability graph that is exact and, when heuristics are used such as restricting requests to be paired with the nearest 30 vehicles, can be solved in real time.

It may be the case that after the assignment process some requests are not served and some vehicles are left unassigned.  If this is the case, a minimum cost matching based on distance is performed between the vehicles and the rejected requests.  Those vehicles are then routed to the locations of the rejected requests as a heuristic for rebalancing to areas that require additional coverage.  This is only one strategy for rebalancing; more sophisticated rebalancing strategies have also been proposed \cite{liu2019proactive, wallar2019optimizing}.

\subsection{Extending Models for Electric Vehicles} \label{sec:extending}

Algorithms for managing ridepooling fleets with electric vehicles are differentiated from typical ridepooling algorithms in that we must now additionally plan for when and where the vehicles will charge according to our operational objective. While many ridehailing operators use business models where individuals choose when to enter and exit the market, something that would naturally provide charge scheduling outside of the scope of the operator, our work considers algorithms that scale for use with high-capacity vehicles, possibly up to capacity 10.  Higher capacity vehicles increase the potential benefits of pooling, but come with the caveat that these vehicles are unlikely to be owned by individuals.  Our choice to use a centrally controlled system where drivers always follow the directions of the operator, as is the case in the model we build off of from \cite{alonso2017demand}, reflects this expectation.  Therefore, we aim to create a charging schedule for the vehicles given fixed charging infrastructure with the assumption that vehicles will respect their assignments.

Rather than computing charging schedules and passenger assignments in a single computation, we separate the operations of the system into two steps.  The first component, from the trip-oriented framework, assigns requests to vehicles constrained on not violating the current charging schedule.  The second component plans when and where vehicles will charge constrained on the previously created request assignments.  As shown in Figure (\ref{fig:ev_overview}) and algorithm (\ref{alg:basic}), they work in successive steps, each making decisions fully consistent with the previous output of the other.

\begin{figure}
    \centering
    \includegraphics[scale = 0.5]{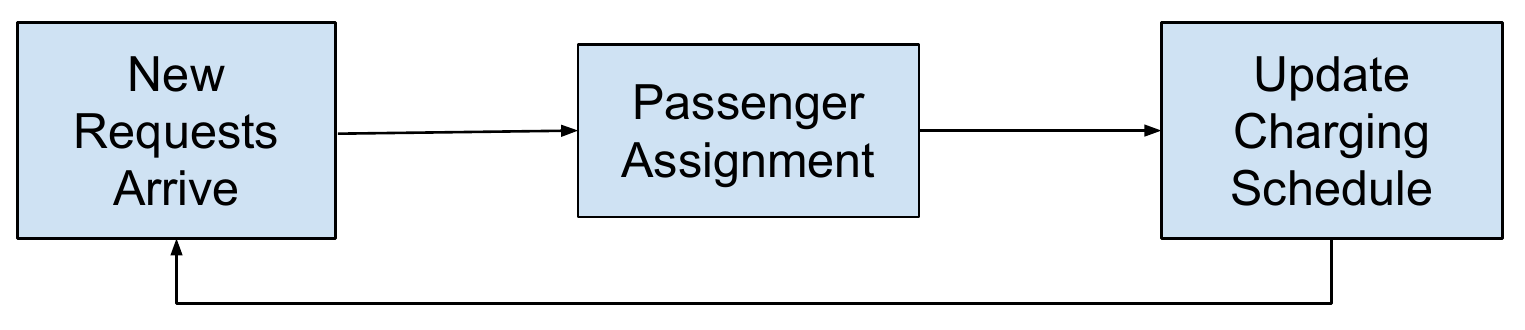}
    \caption{Flow or each iteration.  First new requests arrive.  Constrained on the existing charge schedule, passenger assignments are made.  Finally, conditioned on the new passenger assignments the charging schedule is updated.}
    \label{fig:ev_overview}
\end{figure}

\begin{algorithm}
  \caption {Basic Electric Vehicle Ridepooling}
  \label{alg:basic}
  \begin{algorithmic}[1]
    \State Initially run charging planner.
    \For {each iteration (occurs once every $T^B$ time)}
        \State Collect list of new requests and vehicle states.
        \State Run the request assignment algorithm.
        \State Run the charge planning algorithm.
    \EndFor
  \end{algorithmic}
\end{algorithm}

Implementing the constraints on passenger assignments is straightforward in the trip-oriented assignment framework.  All that must be done is add extra constraints in the trip generation phase to exclude possible tours that violate the charging constraints while solving the constrained-TSP.  Because of this, the rest of the paper is dedicated to formulating the last step - where the charge schedule is generated and updated.

\subsection{Battery Model}
\label{subsec:battery}

An important part of running any system with electric vehicles is to understand how the batteries behave.  In this section we briefly review some of the literature on batteries used in electric vehicles and then demonstrate that simplified models can be justified for typical everyday use in the context of operational optimization of ridepooling fleets.

\subsubsection{Discharging.}

Discharging involves drawing energy from the battery to power the vehicle's drive train and auxiliary systems.  Detailed research has been done on the manner in which electric vehicles consume power based on topography, acceleration, regenerative breaking, etc.  For high accuracy modeling of power consumption, one may refer to works such as \cite{fiori2016power}.

We model discharge, the decrease in battery state of charge, with a linear function of the distance traveled.  This is a reasonable first order approximation since the drive train of the vehicle consumes more power than the auxiliary systems.  If one wishes to go further, models of power consumption could be used to define a custom arc ``distance" using as a measure the typical power consumption of vehicles over those arcs, as done in works such as \cite{lin2016electric}.  More generally, since we always know the path the vehicle has traveled on or is planned to travel on, any general path-based model can be used in our work, though we continue to use the simple distance model for the purposes of our numerical results.

It should be noted that while modern electric vehicles have increasingly long range, practical considerations such as heating, air conditioning, radio, and windshield operation can significantly reduce the range of the vehicle.  While the engine accounts for a majority of energy consumption, according to \cite{restrepo2014performance} using heating can shorten the range of an electric vehicle by 46.4 km per hour of operation.  This is not insignificant in many large ridehailing markets such as New York City.  Similar but smaller effects for air conditioning exist as well.

In order to schedule vehicles to charge one must be able to estimate when the vehicle will run out of power.  The primary variable source of power consumption in a vehicle is the engine, whose energy consumption is a function of distance and terrain.  However, in an online system it is difficult to predict the mileage and terrain vehicles will travel on.  Therefore we use predicted or historical data to estimate the typical rate, $q_{est}$, at which vehicles consume power over time from both time dependent auxiliary services and non-time dependent components such as the engine.  This estimate is based not only on the particular fleet of vehicles but also on the operator's experience with the particular road network they operate on.  Since actual vs predicted power consumption is always stochastic, it is preferable for $q_{est}$ to err on the side of over estimating power consumption over time.  This helps alleviate the risks associated with sudden bursts of power usage not accounted for by the scheduling algorithm.

\subsubsection{Charging.}

While it may be natural to assume that batteries gain charge linearly over time when they are plugged into charge stations, this turns out to not be the case \cite{pelletier2018charge}.  Many batteries charge using a procedure called Constant-Current Constant-Voltage.  Under this system, the voltage in the charger slowly increases to maintain a constant current into the battery.  However, batteries have a maximum safe voltage they can operate at.  Once the voltage in the charger reaches that threshold, the voltage is then held constant at the maximum value and the current into the battery steadily drops off.  This explains why batteries can charge to mid-high charge values, such as 75\%,  quickly though getting to 100\% can take significantly longer.

The differential equation that governs the state of charge in the battery in the constant voltage regime is given by
\[
        \frac{d}{dt}\text{SOC}(t) = \frac{V_{CV} - V_{OC}(\text{SOC}(t))}{R \cdot 3600 \cdot Q}
\]
where SOC is the state of charge, $V_{CV}$ is the voltage in the charger at the constant voltage level, $V_{OC}$ is the open circuit voltage, $R$ is the resistance in the charger, and $Q$ is the maximum battery capacity in ampere-hours.  Simplifying the form of the differential equation, we get
\[
    \frac{d}{dt}\text{SOC}(t) = C_1 - C_2 V_{OC}(\text{SOC}(t)).
\]
If we make the rough approximation that $V_{OC}$ is a linear function of the state of charge, then the solution to the differential equation is
\[
    \text{SOC}(t) = \frac{C_1}{C_2} + k e^{-C_2 t}.
\]

Since under this model the battery can never theoretically be fully charged, we make the following assumption.

\begin{assumption}
\label{assumption:fullcharge}
The manufacturer specifies that ``100\%" charge is some fixed level below this asymptotic maximum.  Meaning, ``100\%" is not truly at full charge.
\end{assumption}  

As an illustration of the model, Let $T$ denote the time to fully charge an empty battery to the manufacturer's full capacity (less than true theoretical capacity), let $R$ denote the number of minutes after which the charging rate of an initially empty battery begins to decrease and let $Q$ be the amount of charge in the battery at time $R$.  Figure (\ref{fig:charging_graph}) shows a typical charging curve.

\begin{figure}[ht!]
  \centering
  \includegraphics[scale=0.6]{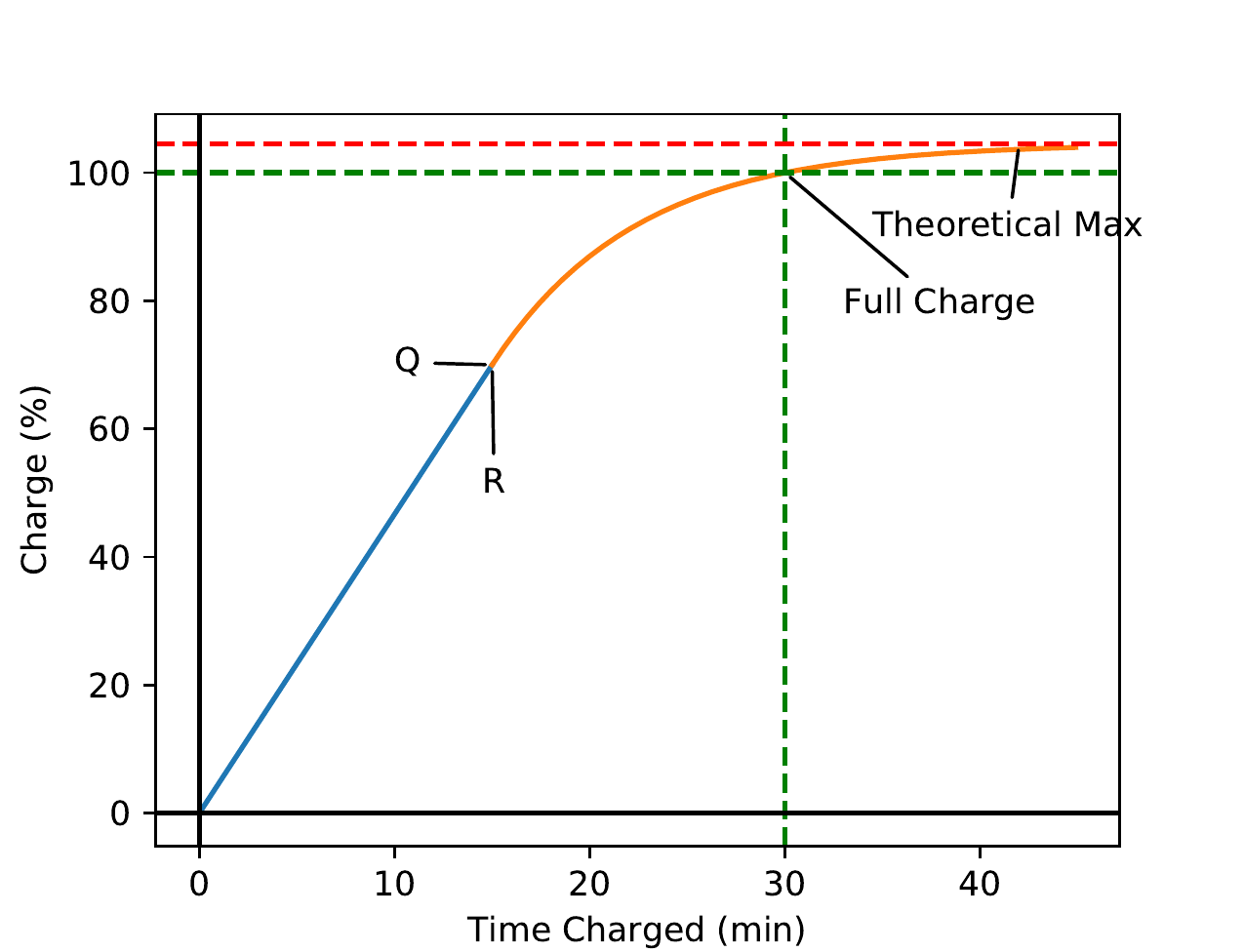}
  \caption{Simple model of charge in an initially empty battery as a function of charging time.  The curve has two components: in blue, the charge in the battery initially increases linearly until time $R$, at which point the battery has charge $Q$ and charging continues as the orange curve which asymptotically approaches the battery's theoretical maximum charge at an exponentially decreasing rate.  Denoted in green are the lines $x = T$ and $y = 1$.  The asymptotic limit is shown as a red line.  In this example, linear charging lasts for the first $R = 15$ minutes at which point the battery has charge $Q = 0.7$ and it takes $T = 30$ minutes for a full charge.}
  \label{fig:charging_graph}
\end{figure}

Due to assumption \ref{assumption:fullcharge}, the charge in the initially empty battery as a function of time is given by the concave function
\[
  \boldsymbol{c_0(}t\boldsymbol{)} = 
  \begin{cases}
    \frac{Qt}{R} & 0 \leq t \leq R \\
    1 - \frac{Q}{R\beta}\left(\frac{1}{e^{\beta(T - R)}}\right)\left(e^{\beta(T - t)} - 1\right) & R < t
  \end{cases}
\]
where the scaling parameter $\beta$ is given by
\[
  \beta = \frac{Q}{R (1 - Q)} + \frac{1}{T - R} W
      (z e^z), \quad z = \frac{R - T}{R}\frac{Q}{1-Q},
\]
where $W$ denotes the Lambert W function and $\frac{Q}{R} > \frac{1}{T}$.

Now consider a battery that is not initially empty.  If the vehicle requires $t$ minutes to completely charge, the amount by which the battery was initially depleted must have been $c(t) = 1 - \boldsymbol{c_0(}T - t\boldsymbol{)}$.  Similarly, given a battery with state of charge $q$ the time to fully charge is given by $f(q) = c^{-1}(q)$.

Since charging is less time efficient as the battery nears full charge, it is natural from an economic standpoint that there may be a policy that vehicles should only charge up to some level $q \in [Q, 1]$ instead of fully charging due to the opportunity costs associated with charging, such as inability to carry passengers during charging.  Additionally, since charging is most efficient in the linear charging region with state of charge in $[0, Q]$ we make the following assumption about when vehicles begin charging.
\begin{assumption}
Vehicles always deplete their battery to a state of charge less than or equal to $Q$ before they begin charging.
\end{assumption}

Consider the following scenario.  Suppose an operator allows the batteries in the fleet's vehicles to discharge down to zero at which point the operator orders the vehicles to charge their batteries back up to state of charge $q'$.

For simplicity, suppose that $q_{est}$ is the discharge rate while vehicles operate and that vehicles have to travel empty for $d$ minutes to reach a charge station.  If the cost per minute of unusable vehicles is $c$, then the average operating cost is

\[
  k(q') = \frac{d + f(q')}{d + f(q') + q'/q_{est}} \cdot c.
\]

The optimal value for $q'$ can be quickly found via binary search.  Figure \ref{figure:marginalcosts} shows examples of choosing a charge up-to policy of minimum cost.  The first shows a general example and the second shows the effect of having no lost time ($d = 0$) traveling with no passengers while they travel to charging stations.

\begin{figure}[h!]
    \centering
    \begin{subfigure}[b]{0.4\textwidth}
      \includegraphics[width=\textwidth]{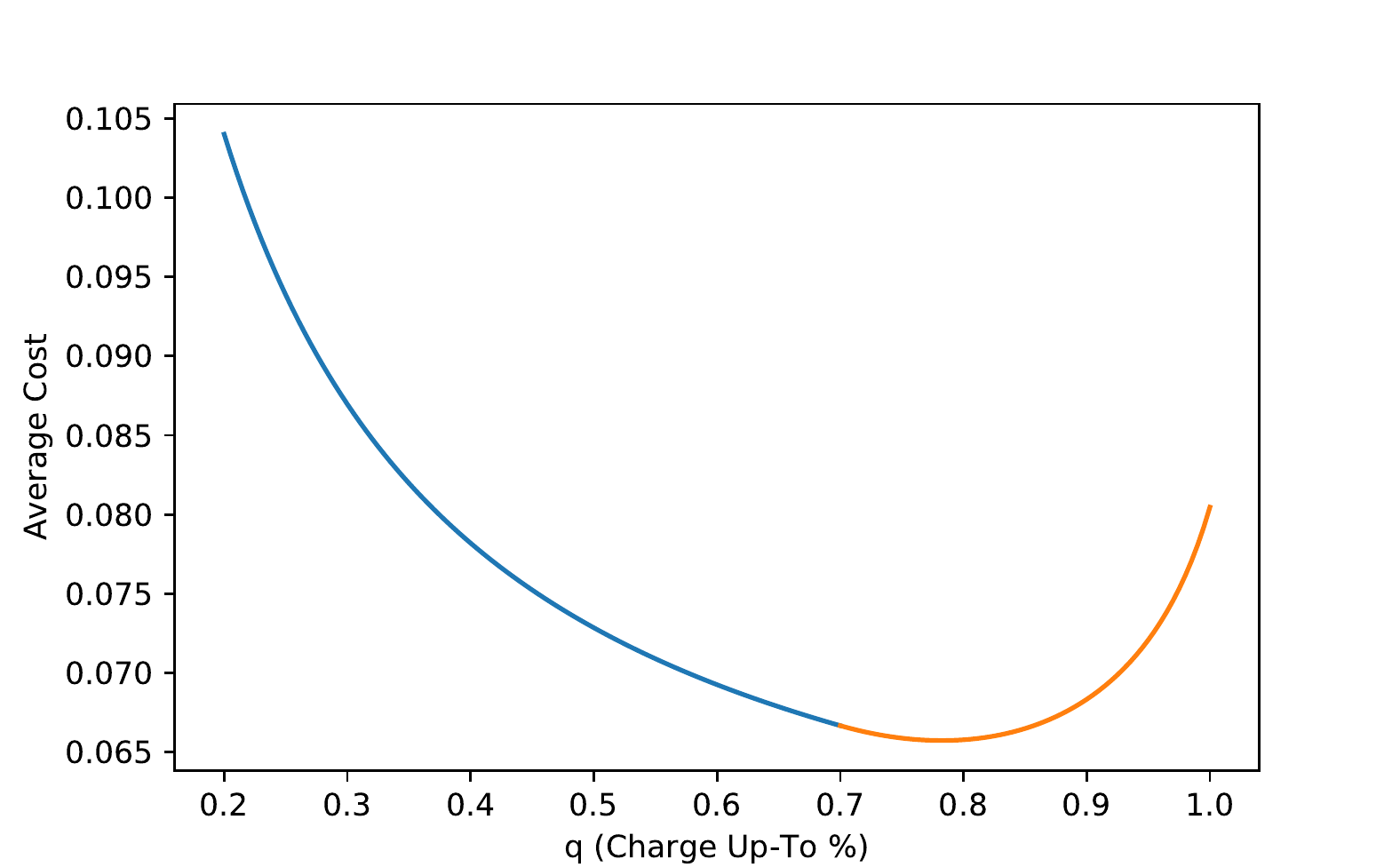}
      \caption{General case.}
      \label{fig:marginalcost_a}
    \end{subfigure}
    \begin{subfigure}[b]{0.4\textwidth}
      \includegraphics[width=\textwidth]{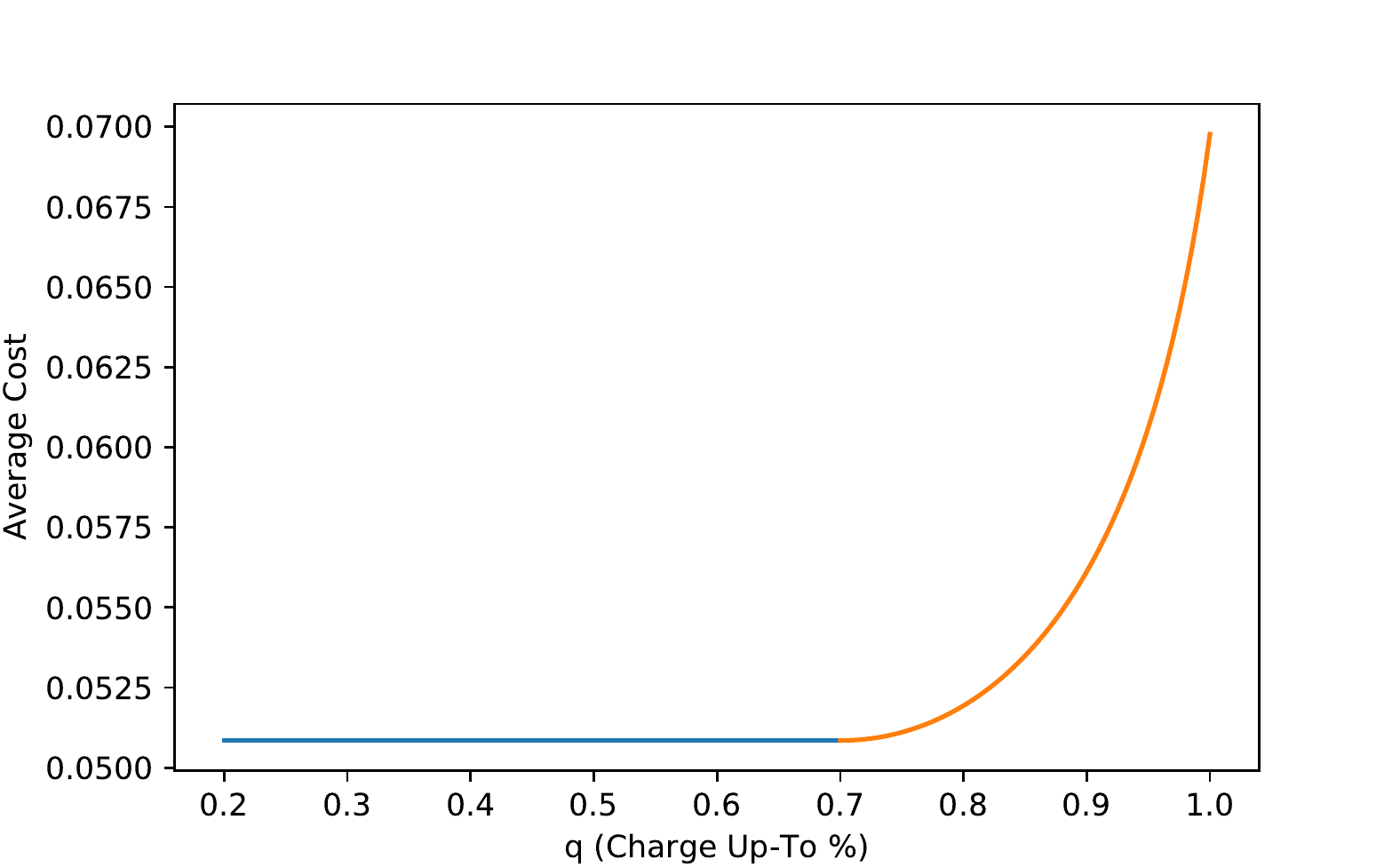}
      \caption{No fixed penalty per charge.}
      \label{fig:marginalcost_b}
    \end{subfigure}
    
    \caption{Assume a vehicle takes 30 minutes to fully charge and receives 70\% of its charge from empty in the first 15 minutes.  Suppose that on average the vehicle wastes 5 minutes of time traveling to the charge station while not holding any passengers, a full battery lasts 400 minutes, and that offline time costs the operator \$1 per minute.  Figure (\ref{fig:marginalcost_a}) plots $k(q')$ with $d = 5$ and Figure (\ref{fig:marginalcost_b}) plots $k(q')$ with $d = 0$.  Optimal value in general case is $q = 78.2\%$ and optimal value when $d = 0$ is any value less than $Q$.}
    \label{figure:marginalcosts}
\end{figure}

In the general case, see Figure (\ref{fig:marginalcost_a}), the optimal behavior is to allow the vehicles to charge slightly beyond the constant-voltage portion of the charging curve.  Referencing the corresponding charging curve in Figure (\ref{fig:charging_graph}), charging just beyond the constant voltage point, $Q$, can be approximately modeled as a linear function with little loss of accuracy.

\subsubsection{Battery Conclusions} \label{sec:batteryconclusions}

While battery charging functions are non-linear, under optimal operating policies they are nearly linear.  This optimal policy is in contrast to that used in studies such as \cite{froger2017new} that assume full recharging and find that the non-linear portion of the charging curve is important to model accurately.

In the rest of this paper we assume that charging times are linear with rate $\eta$ (\% / minute).  The operator fixes a charge-up-to policy under which we relabel charge levels so 100\% denotes the maximum level the operator is willing to charge vehicles to.  We also make the following important assumption about the operator's battery policy.
\begin{assumption}\label{assumption:qmin}
The operator fixes a buffer level $q_{\min}$ which is the lowest charge vehicles are allowed to have.
\end{assumption}
For example, this might be 15\%.  If the vehicle accidentally goes slightly below this level it will still be able to reach a charge station without being towed.  Not only does this help extend the life of the battery, this assumption is important since it will give us a buffer in scheduling vehicles to charge as predicting battery usage under our discharge model cannot be fully predicted since vehicle routing can constantly be changed and updated.  For convenience, we will denote this as 0\% since it is the lowest we intend to let the vehicle discharge to.

\section {Methods}
\label{sec:methods}

In this section we propose algorithms for solving the charge scheduler's problem.  Recall from section \ref{sec:extending} that the scheduler needs to determine when each vehicle will charge and where.  Before we present our proposed algorithms, we first make some notes on the interaction between passenger assignment and charge scheduling.

The charge scheduler and the passenger assigner interact through the necessity of mutually compatible solutions~- passenger assignments cannot prevent a vehicle from charging and an updated charge schedule cannot interfere with a passenger's ride.  In terms of scheduling, it is sufficient to have a mutually agreed method for computing release dates for each vehicle.  The release date for a vehicle is the earliest time the vehicle can charge given its current location and passenger assignments.  When scheduling vehicles for charging, this calculation is used to avoid scheduling charging while the vehicle is serving a passenger.  When considering possible passenger assignments, this determines if a potential trip routing violates the charging schedule.

We define release dates for two cases - when the vehicle's assigned charging station is known and when the assigned charging station is unknown.

Suppose vehicle $v$ is assigned to charge at station $s$.  Let $t_c$ and $n$ denote the time and location $v$ drops off the last passenger under a potential or actual trip.  The earliest time the vehicle $v$ can be scheduled to charge at station $s$ is
\[
    e_{v, s} = t_c + t(n, s),
\]
where $t(a, b)$ denotes the time needed to travel from node $a$ to $b$ in the graph.

In the other case, if the charge station is unknown we have to intelligently add a time buffer.  Let $s^{opt}$ be the closest charge station to the vehicle's ending location $n$.  Given buffer time $D$, we can conservatively estimate the earliest charging time as
\[
    e_v = t_c + \max\{t(n, s^{opt}), D\}.
\]

The choice of $D$ directly impacts the system as it limits which requests vehicles can be paired with whenever the charge station assignment is not yet known.  It would be conservative to choose $D$ to be the worst case travel time between any charge station and node in the graph that are the furthest apart since this will always guarantee each vehicle can be assigned a station to charge at at the assigned time.  In practice one would likely choose a smaller value such as the time sufficient to reach the nearest $k$ charge stations since this would allow more riders to be matched at the small risk that vehicles periodically cannot reach their assigned station.  In such cases, the vehicle could pause and charge after being reassigned stations or, in the worst case, be directed to a closer charge station where it can charge later after some delay.

\subsection{Charge Scheduler Problem Formulation}

The objective of the charge scheduler is to charge vehicles so that, to the extent possible, there are at least $R(t)$ vehicles on the road at any given time.  Let the day be separated into a set of discrete time intervals $T$.  Assume that at each time $t$ we expect we need $R(t)$ vehicles available and the actual number of vehicles available is $B(t)$.  Using the positive part operator $(x)^+ = \max\{0, x\}$, the objective of minimizing total short fall in supply is
\begin{equation}
    \min \sum_{t \in T} \big(R(t) - B(t)\big)^+. \label{eqn:shortfall}
\end{equation}

We allow the loss to be linear in the number of vehicles short we are in the solution in contrast to making a hard constraint.  This is to ensure that a feasible solution always exists.  Additionally, the soft penalty takes a linear form since it is more efficiently computed than other penalties, such as quadratic penalties.

To make the objective well defined, we must specify when a vehicle is and is not available.  Notice that as a vehicle approaches the time it needs to charge, the set of requests it can be assigned to becomes continually more restricted.  This means vehicles on the road may be neither fully available nor unavailable.

Consider the following example:

\begin{example}
Suppose there is a vehicle with a single passenger who will be dropped off in 5 minutes.  When the vehicle drops off the passenger it will be 5 minutes away from the charge station it has been assigned to go to.  There are currently 15 minute remaining until the vehicle must be at the charge station.

In this example, after the vehicle drops off the passenger it will have 5 minutes of idle time where one is naturally temped to say the vehicle is available.  However, if a request arrives that wishes to be driven 10 minutes in a direction away from the charge station, the vehicle will have to reject this request.  On the other hand, if a request arrives asking to travel 4 minutes and wishes to be dropped off close to the charge station the vehicle may be able to serve it.  Therefore the vehicle is neither fully available or fully unavailable.
\end{example}

This justifies the introduction of fractional availability.  Fractional availability is a value between 0 and 1 that is a function of the time remaining until a vehicle is scheduled to charge and of the time remaining since a vehicle completed its previous charging.  When a vehicle is charging the availability is 0.  Likewise, when the previous charging and next charging times are far in the past and future, the availability should be 1.  It is also reasonable that for a vehicle that charges exactly once, the availability should be monotone non-increasing as the charging time approaches, and be monotone non-decreasing afterwards as the charging time has passed.

To incorporate such a general function into an MILP formulation of the scheduling problem, we introduce the function $A(t)$ to denote the availability of a vehicle that charges for one time period starting at time $t = 0$, ending at $t = \Delta$.  Then, for any charging schedule the actual availability of each vehicle can be computed as the minimum over such functions with appropriately adjusted arguments.  As an example, one might choose availability function
\[
    A(t) =
        \begin{cases}
            \min(1, -\delta t)          & t \leq 0 \\
            0                           & 0 \leq t \leq \Delta \\
            \min(1, \delta(t - \Delta)  & t \geq \Delta.
        \end{cases}
\]

Formally, $A(t) : \reals \to [0, 1]$ is monotone non-increasing for $t < 0$, is zero for $0 \leq t \leq \Delta$, and is monotone non-decreasing for $t > \Delta$.  Let the time periods a vehicle is scheduled to charge at be given by the binary variables $x_t$, for times $t \in T$.  The availability of the vehicle at any time $t$ is given by the equation
\[
    a(t) = \min_{s \in T} 1 - (1 - A(t - s)) x_s.
\]

In the context of a MILP, suppose for each vehicle $v \in V$ and each discrete time $t \in T$, the continuous variable $a_{v, t}$ denotes the availability of vehicle $v$ at time $t$ and the binary decision variable $c_{v, t}$ determines if $v$ charges at time $t$.  Since without loss any optimal solution will maximize availability, it is sufficient to use the constraint
\[
    a_{v, t} \leq \min_{s \in T} 1-  (1 - A(t - s)) c_{v, s}.
\]

Thus, since the value of $A(t - s)$ is not dependent on any decision variable even general nonlinear availability functions can be incorporated into the MILP.

\subsection {Two Stage Planning Process}

Our method for charging management is split into two parts: time scheduling and location scheduling.  Here we will briefly justify and explain the details of this separation.

\begin{figure}[b!]
    \centering
    \includegraphics[scale = 0.5]{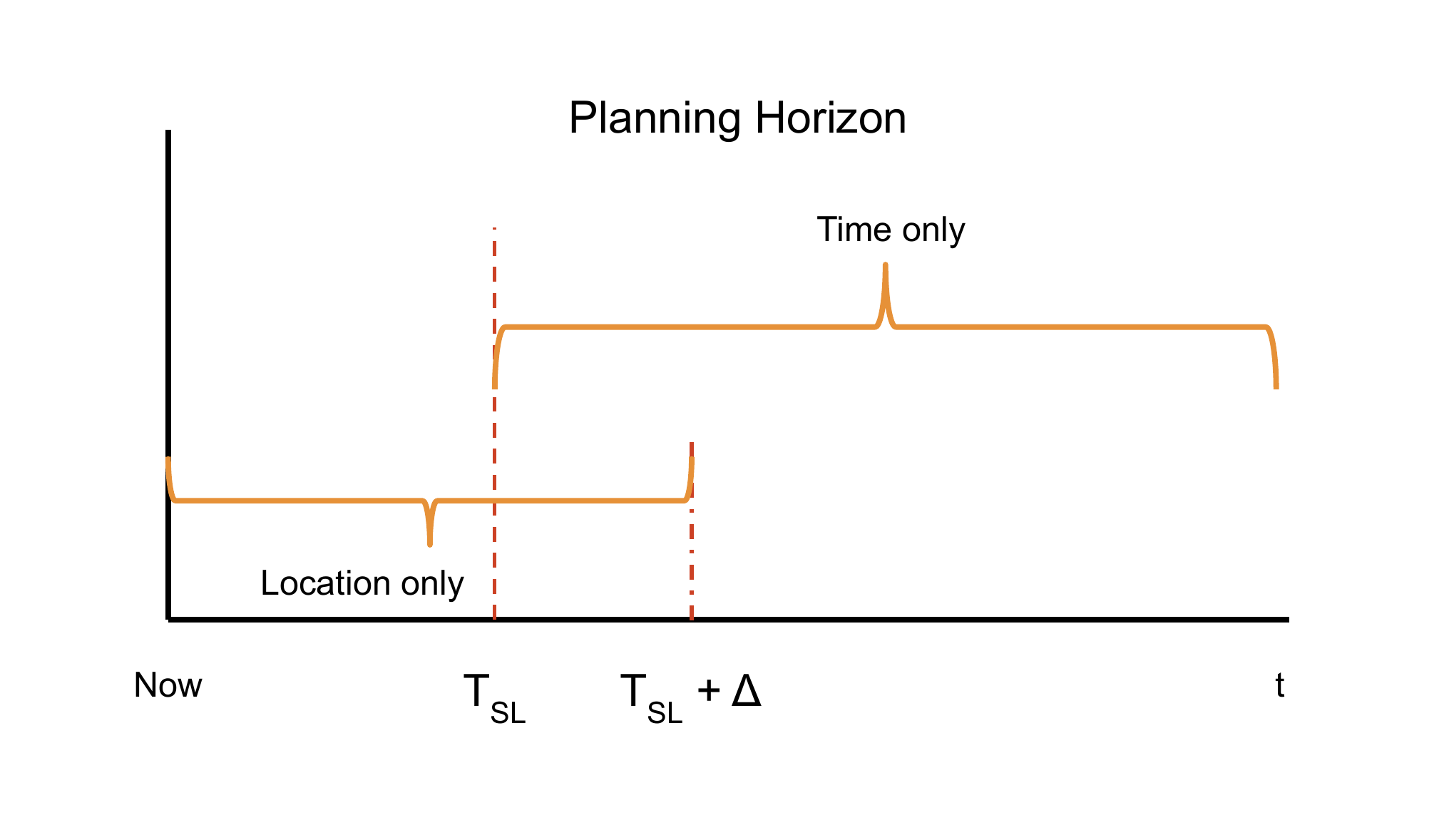}
    \caption{In the long horizon, the schedule more than $T_{SL}$ time in the future can be modified.  The short horizon station problem only assigns vehicles that are scheduled to charge within $T_{SL} + \Delta$ of current time.  $\Delta$ can be 0, but theorem \ref{thm:deltalimit} shows it cannot be too large.}
    \label{fig:longshort}
\end{figure}

Scheduling vehicles to charge has two components: determining when vehicles will charge and where they will charge.  Since ridepooling systems operate online, choosing the charging location for vehicles many hours in advance cannot be meaningfully done in any `optimal' way, other than ensuring each vehicle will have a station to charge at.  With this observation, we split the charge scheduling process into two separate parts: a long horizon where we only consider changing the time at which vehicles charge and a short horizon where we only consider the locations at which vehicles charge.  More formally, the long horizon only modifies charge schedules at least $T_{SL}$ time in the future while the short horizon assigns charge stations to vehicles scheduled to charge within $T_{SL} + \Delta$ for some chosen $T_{SL}$ and parameter $\Delta$ controlling the maximum overlap.  For example, one might choose $T_{SL} = 45$ minutes and $\Delta = 15$ minutes.  See Figure (\ref{fig:longshort}).  It is not necessary to choose $\Delta > 0$, but an upper limit on $\Delta$ exists as will be shown in theorem \ref{thm:deltalimit}.

While the proposed split reduces computation complexity of the problem, the solutions it produces may be suboptimal.  For example, it may be possible to choose better station locations in the short horizon if charge times can be shifted by a small amount.  However, since the short horizon problem is defined only over a  small subset of times, the flexibility to reschedule vehicle charging times is greatly restricted unless the full complexity of the long horizon problem is reintroduced.

With the short and long horizon problems being solved separately, it is necessary to take steps to ensure the times vehicles are scheduled to charge in the long horizon problem admit feasible charge station assignments.  The choice of $D$ in the release date equation $e_v = t_c + \max\{t(n, s^{opt}), D\}$ provides the trade off between guarantees of feasibility with conservative, large $D$ and flexibility to serve more requests with aggressive, smaller $D$.  Sufficiently large $D$ guarantee that feasible solutions will exist, but come at the cost of being restrictive on which requests a vehicle can serve.

Since an operator may not feel that updating time and location information for vehicle charging is necessary at the same frequency as passenger assignments are made, it is possible to run these planning algorithms at only a subset of the iterations.  For example, if requests arrive in batches every $T^B$ seconds, then one could run the short horizon station location problem every $T^S$ seconds (divisible by $T^B$) and run the long horizon time problem every $T^\ell$ second (divisible by $T^S$).  As a variant of algorithm (\ref{alg:basic}), this modified procedure is shown in algorithm (\ref{alg:ridesharing}).

\begin{algorithm}
  \caption {Electric Vehicle Ridepooling}
  \label{alg:ridesharing}
  \begin{algorithmic}[1]
    \State Initially run Long horizon planning algorithm.
    \For {each iteration (occurs once every $T^B$ time)}
        \If {$T^\ell$ time has passed since last run of short horizon planner}
            \If {$T^S$ time has passed since last run of long horizon planner}
                \State Run the long horizon planning algorithm.
            \EndIf
            \State Run the short horizon planning algorithm.
        \EndIf
        \State Run the request assignment algorithm.
    \EndFor
  \end{algorithmic}
\end{algorithm}

As a final note, the choice of overlap $\Delta$ must be less than or equal to $2T^\ell$, as explained in the following theorem.

\begin{theorem}
\label{thm:deltalimit}
It is necessary and sufficient that $\Delta \leq 2T^\ell$ to guarantee that feasible charge station assignments will exist for all vehicles.
\end{theorem}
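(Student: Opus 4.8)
The plan is to reduce the statement to a purely temporal consistency condition between the two planners. Work in ``time-to-charge'' coordinates: at a wall-clock run time $t$, the long horizon is free to alter any scheduled charge time $\tau$ with $\tau \ge t + T_{SL}$, while the short horizon commits a station to every vehicle with $\tau \le t + T_{SL} + \Delta$. Because $T^\ell$ is a multiple of $T^S$, every long-horizon run coincides with a short-horizon run, so the two planners always act on a common grid of spacing $T^\ell$, and their regions of responsibility overlap exactly on the band $\tau \in [t + T_{SL},\, t + T_{SL} + \Delta]$ of width $\Delta$. The quantity that controls feasibility is the wall-clock interval during which a vehicle's station has already been fixed by the short horizon but its charge time is still adjustable by the long horizon: a vehicle first enters the short-horizon window at the run just after $\tau - T_{SL} - \Delta$ and its time becomes immutable at the first long-horizon run past $\tau - T_{SL}$, so this ``danger window'' has width $\Delta$ up to grid rounding. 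First I would record that the buffer $D$ in the release-date equation $e_v = t_c + \max\{t(n,s^{opt}), D\}$ already guarantees that every scheduled vehicle can physically reach \emph{some} station in time, so the only possible obstruction to feasibility is a station-\emph{capacity} conflict created when the long horizon shifts the time of an already-committed vehicle. This isolates the problem to showing that the danger window is short enough, relative to $T^\ell$, that such shifts can always be reconciled with the standing commitments.

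For sufficiency I would carry an invariant forward across runs: at each long-horizon run the set of committed vehicles whose times are still mutable admits a feasible, capacity-respecting station assignment. The key estimate is that when $\Delta \le 2T^\ell$ the danger window straddles at most two cells of the long-horizon grid, so any vehicle whose station is fixed has its charge time finalized within at most two long-horizon updates. At the run where the time locks, the long horizon still sees this vehicle in its mutable region together with any other vehicles sharing its station, so it can choose the (at most two) as-yet-unfrozen times so as to respect each station's capacity $K_s$ before any of them passes below $t + T_{SL}$. Pushing a charge time can only move it later on the grid (the earliest settable time $t + T_{SL}$ increases with $t$), so reachability is never worsened, and the invariant is re-established. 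I would then conclude by induction on the grid that a feasible station assignment persists for every vehicle through to the moment it departs to charge.

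For necessity I would exhibit an adversarial instance with a single station of capacity $K_s = 1$ and a requirement function $R(t)$ chosen so that, once $\Delta > 2T^\ell$, the danger window spans three grid cells and the short horizon is forced to fix the station of one vehicle at the far edge $\tau = T_{SL} + \Delta$ while its time is still mutable over two subsequent long-horizon runs; a second vehicle is then introduced whose availability requirement forces the long horizon, at one of those runs, to move both charge times onto the common floor slot $t + T_{SL}$ at the same (only) station, an unavoidable capacity violation. Because $D$ still certifies reachability, the infeasibility is attributable solely to the timing gap, establishing that $\Delta \le 2T^\ell$ is necessary. The hard part will be making the commitment semantics precise enough to turn this picture into a proof: I must fix exactly when a station assignment becomes irrevocable relative to when a charge time becomes immutable, and then extract the exact constant $2$ from the interaction of the width-$\Delta$ danger window with the spacing-$T^\ell$ grid rather than the weaker threshold $T^\ell$ that governs merely ``being seen once.'' Getting this boundary case right, including the strict-versus-nonstrict inequality at $\Delta = 2T^\ell$, is where I expect the real work to lie.
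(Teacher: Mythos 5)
Your overall architecture (sufficiency via an invariant on the ``danger window'' of vehicles whose station is committed but whose charge time is still mutable, necessity via an adversarial instance) matches the shape of the paper's proof, but both halves break on the same modeling point: the long-horizon problem is \emph{station-oblivious}. Its only capacity constraint is the aggregate one, $\sum_{v} c_{vt} \leq K$ with $K = \sum_{s} K_s$ (constraint (\ref{ilp:long:stationcap})), and the theorem's guarantee must hold for \emph{whatever} schedule that problem emits. So your key sufficiency step --- ``the long horizon \ldots can choose the (at most two) as-yet-unfrozen times so as to respect each station's capacity $K_s$'' --- has the wrong quantifier and asks the long horizon to enforce constraints it cannot see. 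The paper's Case 1 argument is different: with $\Delta = 2T^\ell$, every station-committed vehicle that is still mutable at the next long-horizon run was assigned in the \emph{single} far slot $T_{SL}+\Delta$; since the short horizon assigned those vehicles stations simultaneously and capacity-feasibly at that one slot, any rearrangement of their times leaves their (unchanged) station assignments mutually compatible, while every other vehicle still carries the buffered release date $e_v = t_c + \max\{t(n,s^{opt}),D\}$ and can accept any station. That yields feasibility for an \emph{arbitrary} long-horizon output, which a cooperative-planner argument does not.

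Your necessity construction cannot work at all: with a single station of capacity $1$ the aggregate capacity is $K=1$, so the long horizon's own constraint already forbids two vehicles charging in the same period --- the ``unavoidable capacity violation'' you describe is a schedule the long horizon can never produce. Infeasibility can only come from the gap between aggregate and per-station capacity (hence at least two stations are required), combined with the fact that once a vehicle is assigned a station its release date \emph{drops} the $D$ buffer and becomes $e_v = t_c + t(n, c_1)$; your remark that ``$D$ still certifies reachability'' is precisely what fails for committed vehicles. The paper's Case 2 uses both ingredients: two capacity-one stations farther apart than vehicles can detour, two vehicles in \emph{different} slots of the (now wider, since $\Delta = 3T^\ell$) danger window both assigned $c_1$, passenger assignments made under the station-specific release date that drop each vehicle at $c_1$ exactly at its charge time, and then a long-horizon re-solve (legal, since $K=2$) that merges their charge times --- leaving two trapped vehicles, one slot, one reachable station. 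To repair your proof you would need to replace the single-station instance with this two-station, release-date-switching mechanism, and replace the cooperative-planner step with the single-slot mutual-compatibility invariant.
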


\begin{proof}
Without loss of generality, assume the system clock is currently set to time 0.

\begin{enumerate}
    \item  Case 1 : $\Delta = 2T^\ell$.
    
    Consider the set of vehicles the long horizon problem schedules to charge in time periods $T_{SL}$ and $T_{SL} + \Delta$.  Those in the former will have their assignments fixed and will not appear in the next round of long horizon optimization since the long horizon only modifies schedules at least $T_{SL}$ time in the future.  The later will appear one more time in the next iteration of long horizon optimization.
    
    Any vehicle in the next round of longest optimization will either be a vehicle previously assigned in time slot $T_{SL} + \Delta$, in which case it already has a station assignment, or a vehicle that has no previous station assignment.  In the case of no assignment, the buffer time $e_v = t_c + \max\{t(n, s^{opt}), D\}$ guarantees these vehicles have passenger assignments that are compatible with being assigned any station.  Thus, a feasible charge station assignment always exists.
    
    \item  Case 2 : $\Delta > 2T^\ell$
    
    Let $\Delta = 3T^\ell$.  Suppose there are two charge stations of capacity one that are far apart.  In the first iteration, one vehicle is assigned to charge for a single time period at time $T_{SL} + \Delta$ and another is assigned to charge at time $T_{SL} + 2\Delta$.  Assume both vehicles are assigned the same charge station, call it $c_1$.  Since both of these vehicles now have an assigned station, their buffer time calculation changes to $e_v = t_c + t(n, c_1)$.  This means that it is possible that each vehicle, in the next iteration, is assigned passengers that want to travel to $c_1$ and who will be dropped off just as the vehicles' assigned charge times begins.
    
    In the next iteration, it is feasible for the long horizon problem to now assign both vehicles to the same charge time, $T_{SL} + 2\Delta$.  However, if the distance between $c_1$ and the other station, $c_2$, is greater than $\Delta$, then both vehicles can still only reach $c_1$ due to the passenger assignments.  But they are scheduled to charge at the same time and each station has capacity 1.  So there is no longer a feasible charge station assignment.
\end{enumerate}
\end{proof}

\subsection{Long Horizon Planning}

The long horizon problem sets vehicle charging times that are scheduled at least $T_{SL}$ beyond the present time.  We propose solving this problem using a MILP that assigns charging times for the full day.  In addition to the objective of minimizing short falls given in equation (\ref{eqn:shortfall}), we also include a soft penalty if the schedule expects any vehicles to reach or go below zero charge.  The soft penalty on negative charge levels guarantees there is always a feasible solution to the long horizon problem for any input.\footnote{In practice, the model penalizes vehicles for having negative charge but not for being close to negative charge, so many vehicles will run out of charge slightly before the scheduled charge time due to the discharge function depending on the vehicle's routing, which is not fully known in advance.  This is not a problem since either vehicles can idle at the charge station prior to charging or the `zero' charge mark is actually a safe buffer away from true empty.}  Additionally, recall from earlier that we assume the operator chooses a charge level $q_{\min} > 0$ that we refer to as 0\% charge.  Therefore, even if a vehicle goes slightly below this level it will still be able to operate.

Since we cannot know the exact charge in the vehicles in the future, we estimate the charge by computing the charge in the vehicle when it is done serving its current passengers at time $e_v$, which we call $Q_{v, 0}$.  We then use the estimated linear discharge rate $q_{est}$ discussed in the section on batteries for the time periods the vehicle does not charge and $\eta$ to account for when the vehicle does charge.

The total capacity of all charge stations is denoted by $K = \sum_{s \in S} K_s$.  Since the fractional availability function may be less than one for vehicles that have charged, or are planning to charge, prior to the first time period considered in the long horizon problem, we introduce the value $P_{v, t}$ which equals $A(t - t')$ where $t'$ is the most recent time period vehicle $v$ charged.

There are three decision variables used in the problem.

\begin{enumerate}
    \item $c \in \{0, 1\}^{|V||T|}$ - binary decision to charge vehicle $v$ at time $t$.
    \item $a \in \reals_+^{|V||T|}$ - decision of fractional availability of vehicle $v$ at time $t$.
    \item $q \in \reals^{|V||T|}$ - decision of estimated charge in vehicle $v$ at time $t$.
\end{enumerate}

Using the above, the MILP model\footnote{Let $(x)^+ = \max\{0, x\}$, $(x)^- = \max\{0, -x\}$.} is defined by 

\begin{align}
    \min & \sum_{t \in T} \left(R(t) - \sum_{v \in V} a_{vt} \right)^+ + M \sum_{v \in V} \sum_{t \in T}(q_{vt})^- \label{ilp:long:objective} \\
    \text{s.t.} 
    & \sum_{v \in V} c_{vt} \leq K & \forall t \in T \label{ilp:long:stationcap} \\
    & 
    c_{vt} = 0 & \forall t < e_v \ \forall v \in V \label{ilp:long:tooearly} \\
    & q_{v,t_v} = Q_{v0} & \forall v \in V \label{ilp:long:initialcharge} \\
    & q_{vt} \leq q_{v,t-1} - q_{est}(1 - c_{v,t-1}) + \eta c_{v,t-1} & \forall t > e_v \ \forall v \in V \label{ilp:long:chargeupdate} \\
    & a_{vt} \leq \min_{s \in T} \left(1 - (1 - f(t - s)) c_{vs}\right) & \forall t \in T \ \forall v \in V \label{ilp:long:available} \\
    & a_{vt} \leq A_{vt} & \forall t \in T \ \forall v \in V \label{ilp:long:initialavailability} \\
    & 0 \leq a \leq 1, \quad \quad c \in \{0, 1\}, \quad \quad q \leq 1, \label{ilp:long:variables}
\end{align}
where $e_v$ is as defined at the beginning of section \ref{sec:methods}.

The objective (\ref{ilp:long:objective}) penalizes having fewer than the required number of vehicles at each point in time and also penalizes for vehicles having negative charge.  $M$ allows control between supply shortfall and avoiding negative charge as the primary objective.  $M$ is typically large since it is most important that charge is non-negative whenever possible.

Constraint (\ref{ilp:long:stationcap}) forces the number of vehicles charging at any one time to be at most the total available capacity across all charging stations.
Constraint (\ref{ilp:long:tooearly}) ensures that no vehicle is scheduled to charge before the release date, which accounts for final passenger drop-off and the minimum time needed to reach charge stations.

Constraint (\ref{ilp:long:initialcharge}) defines the charge in the vehicle at the vehicle's release date. 
Constraint (\ref{ilp:long:chargeupdate}) then describes how the charge value evolves over time, decrementing by $q_{est}$ if it is not charging, and increasing by $\eta$ if it is.  The constraint prevents vehicles from gaining more charge in a single period than is possible using the charging infrastructure and works in conjunction with constraint (\ref{ilp:long:variables}) which limits the maximum charge in vehicles to 100\%.

Constraint (\ref{ilp:long:available}) defines the fractional availability of each vehicle in terms of the function $f(t)$, though it does not account for the possibility of reduced availability due to prior charging.
Finally, constraint (\ref{ilp:long:initialavailability}) restricts the value of availability to account for the prior charging. 

While non-preemptive solutions are preferred, the model is preemptive since there are no constraints that require selected charging periods to be of the length necessary to fully charge.  The solution is encouraged to be non-preemptive through the penalty of availability introduced in constraint (\ref{ilp:long:available}).  Since the availability function $a(t)$ reduces the amount of availability a vehicle is considered to have just before and after charging, when a vehicle's charging is preempted the system cannot fully reclaim the vehicle's availability.  Between it and the vehicle that preempted it, this will detract from the systems' total amount of availability and potentially lead to a worse objective value.  In schedules where many vehicles need to charge and the system is running near minimum availability in all time periods, non-preemptive solutions allow the model to charge more vehicles while maintaining the required availability.  Therefore, we expect that typical solutions will not contain many preemptive elements.

\subsection{Short Horizon Planning}

Short horizon planning refers to the problem of scheduling where vehicles should charge when the vehicles' scheduled charge times are within a \textit{short} window, i.e. less than $T_{SL} + \Delta$.  Short horizon planning will make final assignments of vehicles to stations minimizing the impact of charging on the vehicles.  This is done by associating a cost $d_{v, s}$ for matching vehicles $v \in V$ to stations $s \in S$.  For example, the cost $d_{v, s}$ might be the distance from the vehicle's last drop-off point to station $s$.  When it is not feasible to match a particular vehicle and station we let $d_{v, s} = \infty$.

Although the short horizon problem does not select charging times, this problem must nonetheless consider both space and time.  Let $V' \subseteq V$ be the set of vehicles to be assigned charge stations, let $T$ refer to the set of time intervals during which any vehicle in $V'$ is assigned to charge, and recall that $K_s$ is the capacity of each station.  For any time $t \in T$, let $V_t$ be the set of vehicles that charge at time $t$.  For any vehicle $v \in V'$, let $S_v$ be the set of stations the vehicle can reach by its scheduled charge time.

The sole binary decision variables is $P \in \{0, 1\}^{|V'||S|}$ which is the indicator for whether vehicle $v \in V'$ charges at station $s \in S$.  The binary programming formulation is

\begin{align}
\min & \sum_{s \in S, v \in V'} d_{vs} P_{vs} \label{ilp:short:objective} \\
\text{s.t.}
    & \sum_{v \in V_t} P_{vs} \leq K_s & \forall s \in S, \forall t \in T    \label{ilp:short:capacity}  \\
    & \sum_{s \in S_v} P_{vs} = 1 & \forall v \in V'  \label{ilp:short:assign} \\
    & P \in \{0, 1\}^{|V'||S|}
\end{align}

Constraint (\ref{ilp:short:capacity}) ensures that each station is operating at or below capacity during each time period and constraint (\ref{ilp:short:assign}) ensures each vehicle receives exactly one valid station assignment.

The station assignment problem can be shown to be NP-complete by reduction to the hierarchical interval scheduling (HIS) problem with single machine capacities which was introduced and shown to be hard by~\cite{kolen2007interval}

\begin{definition}[from Kolen et al. 2007 ]
Given $K$ continuously available machines and $n$ jobs with starting times $s_j$ and finishing time $f_j$ such that each job can only be processed on machine $\{0, \dots, r_j\}$, it is NP-hard to determine if there exists a feasible schedule.
\end{definition}

\begin{theorem}
    The short horizon station location problem is NP-complete.
\end{theorem}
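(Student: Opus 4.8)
The plan is to establish NP-completeness in the two standard steps: membership in NP, and NP-hardness by reduction from the hierarchical interval scheduling (HIS) problem cited immediately before the theorem. Since the formulation \eqref{ilp:short:objective}--\eqref{ilp:short:assign} is stated as an optimization problem, I would first fix the decision version: using the convention $d_{vs} = \infty$ for infeasible matches, the question becomes whether a \emph{finite-cost} (equivalently, feasible) assignment exists.

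For membership in NP, I would take a proposed assignment $P \in \{0,1\}^{|V'||S|}$ as the certificate. Verifying feasibility amounts to checking constraint \eqref{ilp:short:assign}, the reachability requirement $s \in S_v$, and the capacity constraint \eqref{ilp:short:capacity}. Each is a summation over vehicles, stations, and the (polynomially many) time periods in $T$, so the total verification runs in time polynomial in the input size.

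For NP-hardness, I would reduce an arbitrary HIS instance to a short horizon instance. Given $K$ unit-capacity machines and $n$ jobs, each with interval $[s_j, f_j]$ and machine restriction $\{0,\dots,r_j\}$, I would create one station per machine with $K_s = 1$ and one vehicle $v_j$ per job. I set $v_j$'s scheduled charging interval to $[s_j, f_j]$, so that $v_j \in V_t$ exactly for the time periods overlapping that interval, and its reachable set to $S_{v_j} = \{0,\dots,r_j\}$, with $d_{v_j, s} = 0$ for $s \in S_{v_j}$ and $d_{v_j, s} = \infty$ otherwise. To build the discrete time set $T$, I would sort the $2n$ endpoints $\{s_j, f_j\}$ and use the at most $2n-1$ atomic sub-intervals between consecutive breakpoints as the time periods, so that two jobs share a period iff their original intervals overlap. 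Under this construction, a HIS schedule that places each job on an allowed machine with no two overlapping jobs sharing a machine corresponds exactly to an assignment satisfying \eqref{ilp:short:capacity} and \eqref{ilp:short:assign}, and conversely. Hence the HIS instance is feasible iff the short horizon decision problem answers yes.

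The only step requiring care, rather than genuine difficulty, is the time discretization: I must confirm that encoding the continuous job intervals by their atomic sub-intervals faithfully preserves the overlap (conflict) structure, so that the per-period constraint \eqref{ilp:short:capacity} enforces exactly the machine-exclusivity of HIS. The embedding of the machine restrictions is immediate, because the formulation already permits arbitrary reachable sets $S_v$, and the nested sets $\{0,\dots,r_j\}$ are merely a special case; no extra structure need be imposed. Since the construction has polynomial size and the equivalence is exact, the reduction yields NP-hardness, and together with membership in NP the problem is NP-complete.
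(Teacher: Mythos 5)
Your proposal is correct and takes essentially the same approach as the paper: the same reduction from HIS with one unit-capacity station per machine and one vehicle per job charging over $[s_j, f_j]$, with allowed stations encoding the machine restriction $\{0,\dots,r_j\}$. The only differences are cosmetic --- you encode forbidden pairings with $d_{vs}=\infty$ and ask for a finite-cost (feasible) solution where the paper uses cost $1$ and asks for a zero-cost solution, and you justify that the capacity constraints need only be checked at polynomially many time periods by sorting the $2n$ interval endpoints into atomic sub-intervals, where the paper instead cites the linear bound on the number of maximal cliques in an interval graph; both arguments serve the same purpose.
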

\begin{proof}
To show the problem is in NP it suffices to show that the number of constraints is polynomial in $|V|$ and $|S|$.  Constraints (\ref{ilp:short:assign}) clearly satisfy this.  It is sufficient for constraint (\ref{ilp:short:capacity}) to satisfy this if the number of discrete time periods $T$ is $O(|V|)$.  Since the starting and ending time of each charging job is fixed, an interval graph can be created with vehicles as vertices and edges connecting nodes that represent vehicles with overlapping charging time.  Simultaneously charging vehicles form cliques in the graph and it is sufficient to consider the maximal cliques in the graph.  It is an old result that interval graphs have at most linearly many maximal cliques all of which can be found in linear time \cite{shaohan1988maximal}.  Thus, the problem is in NP.

To show it is NP-complete, start by letting there be $K$ charging stations, each with capacity 1.  For each job $j$, create a vehicle that needs to charge from $s_j$ to $f_j$.  Let the cost $d_{vs}$ of pairing station $s$ with vehicle $v$ be 0 if $s \leq r_j$ and 1 otherwise.

There exists a feasible schedule to the HIS problem if and only if there exists a solution to the ILP of zero total cost.  Thus the short horizon problem is NP-complete.
\end{proof}

While solving the ILP is NP-complete, as with many problems in practice commercial solvers were able to solve sufficiently large instances in less than a tenth of a second.  Earlier it was argued that release dates for the long horizon problem can be computed so that this problem will always have a solution.  But what happens if under more aggressive parameterizations of $D$ there is no feasible solution in some iteration?  Since all vehicles that were present in the previous iteration could be assigned charge stations, we know the infeasibility of the problem must stem from vehicles that were just added to $V'$ and that have the latest charge times of all vehicles in $V'$.  One way to resolve this is by solving the short horizon problem without these vehicles, using a greedy assignment on the new vehicles, and then forcing any remaining unassigned vehicles to be rescheduled in the long horizon problem in the next iteration.  In simulations, this step was never needed.

In the real world, the simulated prediction of the mechanics of the vehicle could be incorrect and charge could be used faster than expected.  This could cause a vehicle to be unable to reach the charge station it was assigned previously because its charge level is now expected to be below zero at its assigned charging time.  Without compromising the practical applicability of this method, there are a variety of model properties that allow this problem to be addressed.

In this case, it is typically sufficient to make a hard constraint that any vehicle that was assigned to charge station $s$ in the previous assignment period must also have that station included in $S_v$ in the next iteration as well.  Recall from section \ref{sec:batteryconclusions} that for battery longevity reasons there is a buffer between what the operator calls $0$ charge and the actual empty level for the battery (assumption \ref{assumption:qmin}).  Thus a vehicle might have a charge level that is too low for any stations to be included in $S_v$ though in reality it has sufficient range to reach at least some stations.  All else failing, vehicles can travel to the nearest charge station and wait for the next available slot, even if it is outside of the assigned time window.  A variety of other resolutions could be used, such as letting the vehicle wait and be assigned a new charging time similar to the resolution proposed for aggressive parameterizations of $D$.

\subsection{Heuristic Long Horizon Planning}
\label{ss:objectives}

While the short horizon planning problem can be solved very efficiently in practice, the long horizon MILP is not practical for real-time use.  This is not surprising since it contains $O(VT)$ variables and constraints making it pseudo polynomial in size with the resolution of time used.  Time indexed formulations are known to be large and intractable in many settings.  While in the machine scheduling literature there exist exact techniques, such as column generation approaches \cite{van2000time}, for solving similar problems, these methods do not yield strong worst-case runtime guarantees and thus instead of adapting exact techniques for this novel MILP formulation we propose an efficient heuristic to update the charge schedule for vehicles.

As motivation for the heuristic, consider a system with a single vehicle and a single charge station.  The vehicle is always either operating or charging.  Supposing discharge and charge are linear function of time, the time for the vehicle to charge after driving for $t$ minutes can be modeled as $at + b$, where $b$ is the offline time spent driving to the charge station.  If the cost of operation is proportional to time offline for charging, then the average cost over the course of the day
\[
    g(t) = \frac{at + b}{at + b + t} = \frac{at + b}{(a + 1)t + b}
\]
is decreasing in $t$.  Meaning, the impact of being offline is minimized when the battery is used to the maximum extent and charging occurs as infrequently as possible. This directly motivates our heuristic:
\begin{heuristic}
    The vehicle with most remaining charge should be scheduled to charge as late as possible.  Other vehicles can be scheduled iteratively.
\end{heuristic}

In the online model we would like to only consider the next time each vehicle charges.  Even though vehicles must charge multiple times we hope that the objective of maximizing time until charging, as suggested in the example, will accomplish this goal.  However, choosing a charging schedule that waits as long as possible has a downside - it maximizes the amount of work that needs to be done in the single charge time horizon we consider.

It is well known in the literature of time dependent job scheduling that processing the shortest job first minimizes the makespan (total work) of the schedule.  In fact, the problem closely resembles the time dependent jobs with deadlines problem studied by \cite{cheng2000single}.  However, by scheduling vehicles with the most remaining charge last we are doing the opposite - we are placing the shortest jobs at the end, resulting in a final schedule where realized processing times are not in any sorted order.

We believe the justification for our heuristic stands even in light of this.  The total amount of work to be done recharging is first the fixed cost of detouring to a charge station and second charging time, which we assume is proportional to the miles the vehicle have driven.  Assuming the vehicle drive the same number of miles under any scheduling scheme, the best schedule will be one the aims to minimize the number of times vehicles charge.  This is what our heuristic does.

This conflict between minimizing work in a single scheduling horizon versus minimizing work over the entire day also illustrates why it was important for the long horizon MILP to produce schedules for the entire day rather than just find the next time each vehicle charges.  Under the shortest work rule used in time dependent job scheduling (in its strictest sense) vehicles that finish charging would have to immediately turn around and charge again while vehicle with depleted batteries would never be charged!

\subsubsection{Algorithm.}

In order to implement the planning heuristic we will first define some objects to store static data and the dynamically updated state of slack in the problem's constraints and then describe how they are used in the algorithm.

We start by creating an object $\mathtt{Constraints}$ which can be queried for each time of day to know the slack on capacity constraints (total capacity $K$) and availability constraints (given by $R(t)$).  As decisions are made, changes can be written to the $\cons$ to show reduction in slack.

The basic algorithm works as follows.  For each vehicle, we compute the value of $e_v$ and store these values in the discrete map $\earliest$.  Similar to $A_{v, t}$ in the MILP formulation, for each previous assignment, we update our $\cons$ object to reflect reduction in availability so that fewer jobs are scheduled at that time.

Since we can compute the charge in vehicles $Q_{v, 0}$ at time $e_v$, we can use $q_{est}$ to extrapolate the first time each vehicle is expected to run out of charge.  This is the deadline for the vehicle, which we store in the discrete map $\deadline$.

Next, we sort the list of vehicles according to their $\deadline$.  Vehicles with later deadlines are given the highest priority, they will be scheduled as far in the future as possible.  We store this sorted list as $\prior$.

Using this priority based scheduling, sometimes it is not possible to schedule all the vehicles to charge given capacity, availability, and deadline constraints.  When this happens, all elements of the schedule are pushed into the future until sufficient space is made, using algorithm (\ref{alg:pushback}) $\mathtt{PushBack}$.  This operation is done by first pushing back vehicles with lowest priority - minimizing the maximum extent any vehicle is moved back compared to its current assignment.  This operation can be performed efficiently using doubly linked lists to avoid redundant computations for groups of vehicles scheduled to charge at the same time.

This could cause some vehicles to have charge times after they would be expected to run out of charge.  Since $q_{est}$ is a slightly pessimistic estimate of charge usage, it is often the case that the vehicles that are pushed back will not actually run out of power before charging.  In addition, since we assume that the 0\% mark is actually a safe buffer $q_{\min}$ away from absolute zero this will not be a significant issue for the vehicle.  In any case, if this poses a practical issue for the operator modifications can easily be made to passenger assignment and vehicle routing algorithm to halt a vehicle at a nearby charge station if it is nearing empty charge.

The $\mathtt{PushBack}$ algorithm is shown in algorithm \ref{alg:pushback} and the full algorithm for charge scheduling is given in algorithm \ref{alg:quick_long}.  The $\mathtt{PushBack}$ algorithm uses two special functions: $\mathtt{ClearAvailability}$ and $\mathtt{ClearCapacity}$.  Each of these takes as an input the working copy of $\cons$, the $\sched$, the vehicle's priority, and time for which the vehicle to be scheduled is blocked either by availability or capacity constraints.  For ease of notation, the $\cons$ and $\sched$ arguments are omitted in the algorithms.

$\mathtt{ClearAvailability}$ and $\mathtt{ClearCapacity}$ do the following: if there is a set of vehicles of higher priority that can be removed that will clear sufficient capacity or availability, the smallest set of such vehicles is returned.  Keeping the recursive nature of the $\mathtt{PushBack}$ algorithm in mind, the functions always return the vehicle of highest priority among all that would clear the constraints in order to minimize work in future iterations.

\begin{algorithm}
    \caption{$\mathtt{PushBack}(\naturals \to (V, T) \ \mathtt{Items}, \sched, \cons)$}
    \label{alg:pushback}
    \begin{algorithmic}[1]
        \If {$\mathtt{Items} = \emptyset$}
            \State \Return $\sched$.
        \Else
            \State $(v, t) = \mathtt{Items}$.pop() (lowest priority item).
            \For {time $s = t, \dots, T$}
                \State $\cons' = \cons$.
                \State $\sched' = \sched$.
                \State $\mathtt{BumpList} : \naturals \to (V, T) = \{\}$.
                \For {increment $i = 0, \dots, $NeededCharge}
                    \If {$\cons'$ availability insufficient at time $(s + i)$}
                        \State obstruction $[v'] =
                                \mathtt{ClearAvailability}(\prior_v, s + i)$.
                        \If {no obstruction}
                            \State Continue to iteration $s = s + 1$.
                        \Else
                            \For {$v' \in [v']$}
                                \State Unschedule $v'$ in $\sched'$.
                                \State Update $\cons'$, $\sched'$.
                                \State $\mathtt{BumpList} \cup= 
                                        \{\prior_{v'} : (v', s + i)\}$.
                            \EndFor
                        \EndIf
                    \EndIf
                    \If {$\cons'$ capacity insufficient at time $(s + i)$}
                        \State obstruction $[v'] = \mathtt{ClearCapacity}(\prior_v, s + i)$.
                        \If {no obstruction}
                            \State Continue to iteration $s = s + 1$.
                        \Else
                            \For{$v' \in [v']$}
                                \State Unschedule $v'$ in $\sched'$.
                                \State Update $\cons'$, $\sched'$.
                                \State $\mathtt{BumpList} \cup= 
                                        \{\prior_{v'} : (v', s + i)\}$.
                            \EndFor
                        \EndIf
                    \EndIf
                \EndFor
                \State Schedule vehicle $v$ at time $(s + i)$ in $\sched'$.
                \State Update $\sched'$ and $\cons'$.
                \State \Return $\mathtt{PushBack}(\mathtt{Items} \cup \mathtt{BumpList},
                        \sched', \cons')$.
            \EndFor
        \EndIf
    \end{algorithmic}
\end{algorithm}

\begin{lemma}
The complexity of the $\pushback$ algorithm (algorithm \ref{alg:pushback}) is $O(T)$.
\end{lemma}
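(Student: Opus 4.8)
The plan is to prove the bound by an amortized analysis resting on two structural invariants of $\pushback$: \emph{(i) monotonicity}, namely that a vehicle is only ever (re)scheduled to a strictly later time period than it currently occupies; and \emph{(ii) a priority discipline}, namely that the obstruction-clearing routines \texttt{ClearAvailability} and \texttt{ClearCapacity} return only vehicles of strictly higher priority, while items are always popped lowest-priority-first. Together these imply that once a vehicle's placement is finalized it is never disturbed by a subsequent recursive call, so the recursion makes monotone progress through both priority order and time.

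First I would pin down a per-call cost model. A single invocation does work proportional to (a) the forward scan over candidate start times $s$ together with the inner increment scan $i = 0,\dots,\text{NeededCharge}$, and (b) the bookkeeping to unschedule any bumped vehicles and update $\cons$ and $\sched$. The doubly linked list over \emph{occupied} time periods is what makes each step cheap: rather than testing every integer time slot, each ``Continue to iteration $s=s+1$'' and each skip past a block of simultaneously charging vehicles is realized as an $O(1)$ pointer hop to the next relevant time, and a whole group of vehicles sharing a charge time is cleared in aggregate. Treating the per-period capacity $K$ and the availability window width $\Delta$ as fixed, each period hosts $O(1)$ charging jobs, so the work attributable to any single period is $O(1)$.

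The core of the argument is then to charge all work to a single monotone sweep of the $O(T)$ time periods. I would define a global scan frontier and show, using invariants (i) and (ii), that it never retreats across recursive calls: because bumped vehicles have higher priority they are re-inserted into $\mathtt{Items}$ and popped only after the current lower-priority item is placed, and because every bump moves its vehicle strictly forward, no recursive call ever reopens a period the sweep has already passed. Equivalently, a potential equal to the frontier position is non-decreasing and bounded above by $T$, so the total number of periods scanned — and hence, under the $O(1)$-per-period cost model, the total work — telescopes to $O(T)$.

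The main obstacle will be controlling the recursion: a single call schedules exactly one vehicle yet may re-insert several bumped vehicles, so a naive count admits a super-linear cascade of bumps. The delicate step is to rule this out by proving every bump is chargeable to irreversible forward progress of the frontier, so that the total number of bumps, and thus of recursive calls, is $O(T)$ rather than $O(|V|\,T)$; this is exactly where the priority discipline is essential, since it guarantees a vehicle is never pushed back by work that has already advanced the frontier beyond it. I would also need to verify carefully that the linked-list maintenance genuinely realizes the claimed $O(1)$ amortized cost per period when co-scheduled groups are bumped together, as this is what prevents the aggregate bookkeeping from dominating the scan.
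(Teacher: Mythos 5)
Your high-level strategy is the same as the paper's: both arguments rest on two pillars, namely (a) $O(1)$ work per time period --- constant NeededCharge, constant-time \texttt{ClearAvailability}/\texttt{ClearCapacity}, and doubly linked lists making Schedule/Unschedule and constraint updates $O(1)$ --- and (b) a forward-progress argument for the recursion, driven by the facts that bumped vehicles are only reconsidered at later times and that only higher-priority vehicles can be bumped. Where you differ is in how (b) is formalized, and this is where your plan has a genuine flaw.

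The invariant at the heart of your amortization --- that the scan frontier never retreats, i.e.\ ``no recursive call ever reopens a period the sweep has already passed'' --- is false as stated. A single placement may bump a \emph{set} of vehicles (the algorithm iterates over the list $[v']$ returned by the clearing routines), and all of them are re-inserted into \texttt{Items} with restart times near that one placement position. Since \texttt{Items} is popped by priority rather than by restart time, the first-popped bumped vehicle may scan forward a long distance before placing, advancing your frontier; the next-popped sibling then restarts at the original bump position, well behind the frontier, and must rescan the intervening periods. It cannot place in them: any period unclearable for its lower-priority sibling is unclearable for it as well (its set of bumpable, i.e.\ strictly-higher-priority, vehicles is smaller), and the already-placed sibling now occupies its stopping slot with lower priority, hence is itself unbumpable. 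So your potential function is not monotone and the telescoping argument does not go through. What is actually needed --- and what the paper claims, in the weaker form ``each time period is considered at most a constant number of times'' --- is that such sibling rescans are bounded by a constant factor: at most $O(\text{NeededCharge}) = O(1)$ vehicles are bumped per placement, and their scan ranges are nested over an unchanged stretch of the schedule, so each period is charged $O(1)$ scans rather than exactly one. Your argument would need to be repaired to account for these bounded re-openings (the paper itself asserts this constant bound rather than proving it in detail, leaning on the rule that the clearing routines always bump the highest-priority vehicle at a slot); strict frontier monotonicity is simply too strong to be true of this algorithm.
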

\begin{proof}
Note that the first loop runs over $O(T)$ time frames.  Assume that the maximum number of time windows a vehicle may need for a full charge, NeededCharge, are $O(1)$.  Within each loop, we have a constant number of steps.  The $\mathtt{ClearAvailability}$ and $\mathtt{ClearCapacity}$ functions take constant time by looking up the time index in an array and then inspecting at most the first $c$ vehicles assigned at that time, with $c$ being the number of time periods needed for charging and assumed to be $O(1)$.  Since the $\mathtt{ClearAvailability}$ and $\mathtt{ClearCapacity}$ function prefer to select vehicles with the highest priority among all vehicle eligible at each point in time, running the Unschedule and Schedule routines takes constant time by storing elements for each vehicle at each time in a doubly linked list structure.  Updating constraints is also constant time since we have $O(1)$ removals to clear space and $O(1)$ insertions to add the current vehicle.

To see that the algorithm is $O(T)$ including recursion, notice that when vehicles are removed they are only considered for rescheduling at a later time.  This is because we always choose to remove the vehicle scheduled at that time with the highest priority and we can only bump vehicles of higher priority when choosing a new charge time for the vehicle.  Thus each time period is considered at most a constant number of times and the recursive algorithm is $O(T)$.
\end{proof}

\begin{theorem}
The complexity of the Long Horizon Heuristic (algorithm \ref{alg:quick_long}) is 
\[ O(n\log(n) + nT). \]
\end{theorem}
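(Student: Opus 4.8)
The plan is to partition algorithm \ref{alg:quick_long} into three phases and bound each separately: an initial preprocessing phase, a sorting phase, and the main priority-driven scheduling loop. Writing $n = |V|$ for the number of vehicles, I would argue that the three phases cost $O(nT)$, $O(n\log n)$, and $O(nT)$ respectively, so that their sum is $O(n\log n + nT)$ as claimed.

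First I would handle preprocessing. For each vehicle we compute $e_v$ (a constant-time combination of the final drop-off time and the nearest-station travel time), the release-date charge $Q_{v,0}$, and, by extrapolating with the constant discharge rate $q_{est}$, the vehicle's entry in $\deadline$; each of these is $O(1)$ per vehicle, for $O(n)$ in total. The only step whose per-vehicle cost is not obviously constant is writing the availability reductions $P_{v,t}$ induced by prior charging assignments into the $\cons$ object, since a single prior assignment can depress availability across a window of time slots. Bounding this window crudely by $T$ gives at most $O(T)$ work per vehicle and hence $O(nT)$ overall, which is already dominated by the stated bound, so no sharper analysis is required here.

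Second, sorting the vehicles by their $\deadline$ values to form $\prior$ is a single comparison sort on $n$ elements and costs $O(n\log n)$.

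Third, and this is where the main argument lives, I would bound the scheduling loop. The loop processes the $n$ vehicles in priority order; for each one it attempts to place a charging job as late as feasible and, when capacity or availability blocks this, invokes $\pushback$ (algorithm \ref{alg:pushback}) to reshuffle the schedule. By the preceding lemma, each such top-level call to $\pushback$ --- recursion, bumping, and constraint updates included --- runs in $O(T)$ time. The delicate point, and the step I expect to be the main obstacle, is confirming that the bumping performed inside one vehicle's $\pushback$ call cannot cascade into extra work charged to later iterations of the outer loop: I would appeal to the invariant established in the $\pushback$ lemma that a bumped vehicle is only re-examined at a later time and that each time slot is touched a constant number of times per call, so that the cost of resolving all displacements triggered by a given vehicle is absorbed into that vehicle's single $O(T)$ call. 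Granting this, the loop performs $n$ placements of cost $O(T)$ each, for $O(nT)$ overall. Summing the three phases then yields $O(n) + O(nT) + O(n\log n) + O(nT) = O(n\log n + nT)$, which is the desired bound.
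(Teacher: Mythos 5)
Your proof is correct and takes essentially the same approach as the paper's: sort the vehicles in $O(n\log n)$ time, then bound each of the $n$ iterations of the scheduling loop by $O(T)$ --- including any $\pushback$ call, via the preceding lemma whose invariant (bumped vehicles are only reconsidered at later times) already absorbs all cascading work into that single call --- for a total of $O(n\log n + nT)$. The preprocessing phase you analyze (computing $e_v$, deadlines, and prior-charge availability reductions) is treated in the paper as input to the algorithm rather than part of it, but since you bound it by $O(nT)$ this makes no difference to the result.
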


\begin{proof}
The algorithm sorts all vehicles by priority using $O(n\log(n))$ time.  Then the algorithm tries to schedule the vehicle in one of $O(T)$ time periods with each period requiring a constant amount of work to verify whether there is space or not.  Insertion into the schedule takes constant time if the duration is $O(1)$, which we assume, and the schedule is implemented as an doubly linked list.  

If the $\pushback$ algorithm is required, this adds at most $O(T)$ time to the loop.  Thus the loop runs in time $O(nT)$.  Since the algorithm begins with sorting, the total running time is $O(n\log(n) + nT)$.
\end{proof}

\begin{algorithm}
  \caption {Long Horizon Heuristic$(V, \cons, \deadline, \earliest, \prior)$}
  \label{alg:quick_long}
  \begin{algorithmic}[1]
    \State $\sched : V \to [T] = \{\}$.
    \For {each vehicle $v \in \prior$ in descending order}
        \For {each time $s$ from $\deadline_v$ down to $\earliest_v$}
            \State $\text{duration} =$ number of time windows $v$ needs if start charging at $s$.
            \If {Can schedule $v$ at time $s$}
                \State $\sched = \sched \cup \{v : [s, \dots, s + \text{duration}]\}$.
                \State break
            \EndIf
        \EndFor
        \If {not yet scheduled}
            \State $vs : \naturals \to (V, T) = \{\prior_v : (v, \earliest_v)\}$.
            \State $\sched = \mathtt{PushBack}(vs, \sched, \cons)$.
        \EndIf
    \EndFor
  \end{algorithmic}
\end{algorithm}

\section {Numerical Results}
\label{sec:results}

To test the performance of our proposed methods, simulations were performed using publicly available taxi ridership data in Manhattan from August 2013~\cite{nycdata}.  We use a threshold of $T_{SL} = 45$ minutes and let requests arrive in batches every $T^B = 60$ seconds.  We weighted the objective function to first maximize the number of requests served and then, second, minimize total vehicle travel time.

The charging rate $\eta$ was chosen so that vehicles would take 30 minutes for a full recharge.  The estimated discharge rate over time $q_{est}$ was estimated using distance over time simulation results from ICE simulations whose results were dependent on the choice of request data set used (e.g., which day).  Recall, discharge does not have to be computed as a function of distance but we choose to do so for simplicity.

The simulations were performed on a desktop using an Intel Core i7-6700 CPU with 8 cores and 16GB of RAM.  The simulation was written in C++ and optimization was performed using Mosek version 8.

The experiments tested four different methods:
\begin{enumerate}
    \item  MILP:  Long horizon MILP model and short horizon MILP model.
    \item  Heuristic:  Long horizon heuristic in combination with the short horizon MILP model.
    \item  Benchmark:  Simple algorithm that does not use information about future demand, defined in section \ref{subsec:benchmark}.
    \item  ICE Baseline:  None of the vehicles are EV.
\end{enumerate}

In the rest of this section we discuss how the inputs to the simulations were produced and discuss the results.

\subsection {Charge Station Location}
\label{chargestationlocation}

As discussed in the literature review, a variety of approaches have been proposed to choose locations for charging stations.  Following the spirit of the methods that use K-means problems, we considered locating the charge stations using a weighted k-means clustering algorithm.  As a proxy for determining where the vehicles would have no passengers, and thus available to charge, we referred to the taxi data set which consists of historical request data for Manhattan in the form of origin-destination pairs.  Every time a vehicle charges it is between its last drop-off and next pickup and so this data set can be interpreted as samples of possible locations at which vehicles will be empty.

Combining the two ideas, we performed weighted k-means clustering where weights were proportional to the number of originations/destinations at each node and normalized so the total weight of the graph was equal to the desired total charging capacity.  The central node of each cluster was chosen to be the charge station and the total capacity of the charging infrastructure was distributed in proportion to the corresponding cluster weights.

\subsection{Benchmark Method}
\label{subsec:benchmark}

As a benchmark, we created a simple algorithm to schedule vehicles that does not take into account any information about future demand.  The method is similar to the method used in \cite{fehn2019modeling} which greedily assigns vehicles to the closest charge station, though we additionally allow the vehicle to go to further stations if it is able to charge sooner there.  In addition, we allow charge going below a threshold to be the signal for charging similar to \cite{fehn2019modeling}, though rather than implementing this as a check whenever a passenger is dropped off we also check before assigning a new passenger to a vehicle, a necessary change since \cite{fehn2019modeling} only considers single capacity vehicles.

In this algorithm, vehicles make individual decisions of when to charge.  Vehicles operate all day until their charge goes below the threshold $q_{\min}$.  Then the vehicles continue to serve requests that have been assigned to them, but refuse to accept any new passengers.  Once the final passenger has been dropped off, the vehicle requests space at a charge station that will allow the vehicle to charge the soonest.  To help keep vehicles near areas of high demand, we limit the set of charge stations the vehicle can charge at to be within a specified time radius of its location, which we choose to be 15 minutes.  Once the vehicle is finished charging it returns to service.

This is described in Algorithm \ref{alg:benchmark}.  Additionally, the Algorithm \ref{alg:naive} describes how a vehicle that is empty and is below the charge threshold is assigned to a charging station in a greedy manner.

\begin{algorithm}
    \caption {Benchmark Charging Algorithm}
    \label{alg:benchmark}
    \begin{algorithmic}[1]
        \For {each vehicle $v$}
            \If {charge in $v$ is below threshold}
                \State Forbid vehicle from accepting new passengers in next assignment iteration.
                \If {vehicle is empty and has no charge station assignment}
                    \State Assign charge station using Greedy Assignment algorithm.
                \EndIf
            \Else
                \State  Clear vehicle's charge station assignment.
            \EndIf
        \EndFor
    \end{algorithmic}
\end{algorithm}

\begin{algorithm}
  \caption {Greedy Station Assignment(\textbf{vehicle})} 
  \label{alg:naive}
  \begin{algorithmic}[1]
    \State $\ell \gets $ current location of \textbf{vehicle}.
    \State $S \gets $ set of charging stations within distance threshold.
    \State Times $\gets \{\text{time}(\textbf{vehicle}, s) \text{ for } s \in \text{Stations}\}$.
    \State Earliest $\gets \{(s, \text{first available slot time at $s$} \text{ for }) s \in S\}$.
    \State Assignment $\gets \arg\min_{s \in S} \max\{\text{Times}_s, \text{Earliest}_s\}$.
    \State Update $S$ with Assignment.
    \State \Return $s$, $\max\{\text{Times}_s, \text{Earliest}_s\}$.
  \end{algorithmic}
\end{algorithm}

\subsection {Producing Availability Requirements $R(t)$}
\label{subsec:availability}

To compute the vehicle availability requirement vector $R(t)$ for methods that are aware of information about future demand, we employed a simple procedure that roughly models the required number of vehicles by time of day as a weighted average of the entire fleet and a scaled multiple of the customer demand profile.  Letting $d(t)$ be an expected demand profile vector and $\lambda$ be a scalar parameter, our model for $R(t)$ is described by the equation
\[ R(t) = |V| \Big(\lambda d(t) + (1 - \lambda)\Big). \]

First, to select $d(t)$ we started by considering all requests in the input data file and associated each with an interval starting when the request is made and with duration equal to the travel time from the request's origin to destination.  For every thirty minute block of time during the day we counted the number of requests whose intervals overlap any portion of that block.  We used the results to form $d(t)$ as a piecewise constant function and normalized so the largest value was 1.

Next, to select $\lambda$ we started by computing an estimate for the number of hours of charging that must be done given the size of the fleet and the range of the vehicles.  We first tried using a $\lambda$ that made the number of offline vehicle hours equal to the total charging demand.  This turned out to not be sufficient since vehicles have limitations on when charging can be done, such as not running out of power and being limited to 100\% charge.  The peaks and troughs in the demand function were sometimes too spread out or too heavily concentrated in the morning to yield an effective requirement function.  Thus we decided another method was needed to determine a reasonable $\lambda$.

We used a greedy algorithm to create a naive charging schedule for the day given the expected battery discharge rate and $\lambda = 0$, so no feasible solution could be found.  Then $\lambda$ was continuously raised until it was possible to find a feasible schedule using the greedy algorithm. This $\lambda$ was then used for the experiments.  Since this method is only a heuristic, it is possible that optimal solutions may exist for tighter (smaller) $\lambda$, meaning more vehicles required on the road at all times.  In practice an operator would have many factors to consider beyond the scope of this paper and so we leave open the space of possible procedures for constructing $R(t)$.  We explored some of the possibilities of other constructions in $R(t)$ in our experiments as well, see tables \ref{table:220_rt_sensitivity} and \ref{table:1000_rt_sensitivity}.

\subsection{Fleet Size and Reduced-Size Map for MILP Method Experiments}
\label{fleetsize}

Simulations with ICE vehicles and those using computationally efficient methods, such as the benchmark and long horizon heuristic methods, were tested with up to 2000 vehicles.  Since the long horizon MILP models are large and do not scale well in simulations with many vehicles, in order to run experiments with the MILP method we reduced the number of vehicles and produced a reduced size graph for testing.  

To make the reduced-size graph we selected a portion of Manhattan roughly from 19th Street to 59th street.  This reduced the total number of intersections in the graph by about 78\%. 
All requests that did not originate and terminate in this region were removed from the input data set.  Calibrating the experimental service rate by randomly filtering out requests was put off until the number of vehicles had been chosen.

To determine the number of vehicles to use in the reduced size experiment, we ran a small number of iterations of the long horizon MILP at various vehicle counts.  We chose 220 vehicles since it was near the limit of how many vehicles the MILP solver could handle in a reasonable period of time.  Based on observations, we choose two time limits for the MILP solver to use in the final experiments of 180 and 55 seconds, after which in each case we use the best solution found so far.  Giving the MILP solver more than 180 seconds did not result in significantly lower objective values.  To select the demand level, we ran the ICE method on the reduced size system at various levels of demand with 220 vehicles and reduced the number of requests until the service rate reached 90\%.  This resulted in filtering 2/3 of the requests from the input data set.

In summary, we chose 220 vehicles to use on the reduced-size map on which we perform experiments on all four methods.  In addition, we ran experiments using 1000 and 2000 vehicles for the heuristic method, benchmark method, and ICE baseline.

\subsection{Battery Range and Expected Charge Duration}
\label{estimatedbattery}

While battery ranges on small personal vehicles have grown significantly in recent years, albeit with high price tags, ranges for larger commercial vehicles lag behind.  Since our target system is a high-capacity ridepool system, we surveyed information about charge range for electric service and passenger vans.  When reviewing the data, we assume that tests performed using NEDC have real-world energy consumption that is 38\% higher \cite{mock2014laboratory} and tests that do not list a method were performed using the newer WLTP standard with resulting real-world energy consumption that is 14\% higher \cite{dornoff2020way}.  By reviewing a list of electric vans that are to be available in 2020 \cite{drivingelectric}, we find that the average listed electric van has a range of 160 km using the default or lower capacity battery option and 200 km using the default or higher capacity battery option.  From this, we selected 180 km as the range for our hypothetical vehicles in simulation.

Both the MILP method and the heuristic for charge scheduling require an estimate of when batteries in vehicle will need to be charged.  However, the assumption that energy usage is linear in distance traveled requires us to produce a separate estimate for how long batter charges last.  For each of the 3 fleet sizes we first run a simulation using ICE vehicle.  Then, using the number of miles driven during the simulated day we produce an estimate of the number of hours until discharged.  For 220 vehicles, the result was 13 hours and for 1000 and 2000 vehicles it was 10 hours, different values due to differences in request density and map size discussed in \ref{fleetsize}.  See Table \ref{table:battery_estimates} to see how sensitive these estimates were.

\subsection{Data Processing and Simulator Implementation}

The Manhattan taxi ridership data \cite{nycdata} was converted into a format usable by the simulator by map matching the origin and destination coordinates onto the road network used in the simulator.  The time that each request was submitted in the ridership data was preserved and the resulting combinations of time and locations were stored in a file.  To run the simulation, the file is loaded and each request is presented to the assignment algorithm at the end of the batch interval during which it actually arrived in the ridership data.

At the end of each batch of processing, the central operator gives each vehicle an updated assignment list.  The operator can update the assignment list at any time without regard for the state of the vehicle, with the exception that the assignment must be feasible given the quality of service constraints and passengers that are already onboard must remain in the assignment list until the vehicle drops them off.  After each batch of assignments, the simulator executes the following steps for each vehicle:
\begin{enumerate}
    \item  From the trip assigned from the RTV graph, get optimal ordering of passenger pickup and drop off, given current vehicle position, onboard passengers, time remaining for quality of service constraints for each request, and any electric constraints on the vehicle.
    \item  Expand the ordering with an edge-by-edge path.
    \item  Follow the path, making pickups and drop offs as appropriate.  Pause when the end of the batching interval is reached, even if the vehicle has not completed its assignment or has only partially traversed the current edge.  Travel will resume after the next batch of assignment updates have been processed.
\end{enumerate}

\subsection {Results}

We primarily evaluated the performance of each method by the percentage of requests served, though we also report vehicle miles traveled as well as other statistics.  The statistics we report are similar to those used by \cite{alonso2017demand}, \cite{ma2014real}, and \cite{ota2016stars}.  A summary of all statistics used and their abbreviations are given below.  See Table \ref{table:statistics} for a more compact reference.

\begin{itemize}
    \item  \emph{Service rate}, abbreviated as \emph{Rate}, is the ratio of the number of requests that were served to the total number of requests that were made.
    \item  \emph{Waiting time}, denoted \emph{WT}, is the average amount of time, among requests that were served, between system entry time and vehicle pickup.
    \item  \emph{Riding time}, denoted \emph{RT}, is the average amount of time between requests boarding vehicles and alighting from them.
    \item  \emph{Total delay}, denoted \emph{delay}, is the average difference between shortest path travel time from requests' origins to destinations and the actual time from request system entry to drop-off.
    \item  \emph{Absolute utilization}, denoted \emph{Abs}, is the ratio of the total number of rider-minutes in the system to the total number of vehicle-operating minutes in the system.
    \item  \emph{Rider share rate}, denoted \emph{rider}, is the ratio of the total number of rider-minutes to the total number of vehicle minutes while at least one rider was in the vehicle.
    \item  The \emph{shared rate}, denoted, \emph{shared}, is the percentage of riders who shared their ride with another rider at any point in their journey.
   \item  \emph{Distance} is the total number of kilometers driven by the entire fleet of vehicles over the course of the day.
\end{itemize}

\begin{table}[b!]
    \centering
    \caption{Description and abbreviations for statistics reported from experiment.}
    \begin{tabular}{l|l|l}
        Statistic & Abbr. & Description \\
        Service Rate & Rate & Requests served / requests made \\
        Waiting Time & WT & Average of pickup time minus request time \\
        Riding time & RT & Average time in vehicle \\
        Total delay & delay & Average ride time minus direct travel time \\
        Abs utilization & Abs & Rider minutes / total vehicle minutes \\
        Rider share rate & rider & Rider minutes / non-empty vehicle minutes \\
        Shared rate & shared & Percent of riders that shared a ride \\
        Distance & distance & Average distance vehicles travel \\
    \end{tabular}
    \label{table:statistics}
\end{table}

The results for experiments with 220 vehicles are shown in Table \ref{table:220_general_results}.  Compared to the simulation with internal combustion engine vehicles (ICE), the benchmark method has the lowest service rate, which performs about 8\% worse.  All of the non trivial methods for electric vehicles perform well, recovering around 80\% of the lost service rate compared to the upper bounding ICE vehicle simulation.

\begin{table}[b!]
    \caption{Simulation results for 220 Vehicles.  The MILP and Heuristic methods had significantly higher services rates than the benchmark.  The ICE baseline has no charging and is an upper bound on service rate performance.}
    \csvautotabular{tables/1-220-vehicles.csv}
    \label{table:220_general_results}
\end{table}

The reason for the poor performance of the benchmark method is largely due to the offline time of the vehicles.  Figure (\ref{figure:220_naive}) shows that the vehicles charge in a single short period.  Although the number of vehicles charging is capped at the number of charging stations in the simulation, the drop of the vehicle miles traveled (shown in orange) indicates congestion at the charge stations and that many vehicles are waiting offline.  Vehicles wait an average of 76 minutes before they can begin charging due to vehicles running out of charge at similar times.

\begin{figure}[h!t]
  \centering
  \includegraphics[scale=0.6]{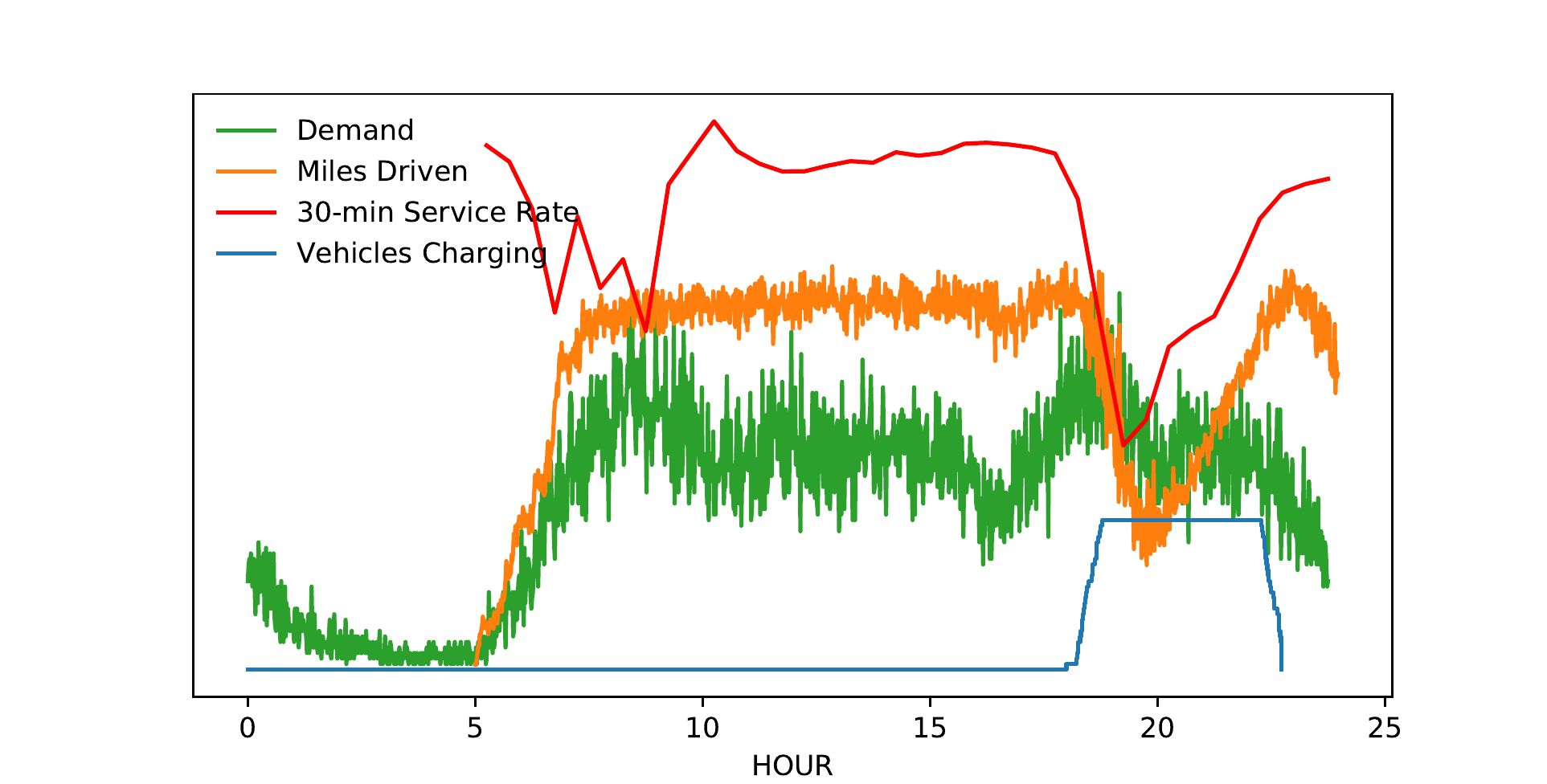}
  \caption{Benchmark method for 220 vehicles.  Notice that vehicle charging, in blue, is concentrated during a single period in late afternoon.  The number of vehicle miles traveled during this time, in orange, plummets.  The 30 minute service rate during the period, shown in red, also dips resulting in a reduction of the total cumulative service rate.} 
  \label{figure:220_naive}
\end{figure}

The variation in performance of the heuristic and MILP methods comes down to the quality of the availability vector $A(t)$ given as an input to the problem.  We see that the MILP and heuristic methods perform similarly.  In Figure (\ref{figure:220_ordering}), we can see that avoiding congestion at charge stations played a major role in reducing the negative impact on service rate.

\begin{figure}[h!t]
  \centering
  \includegraphics[scale=0.6]{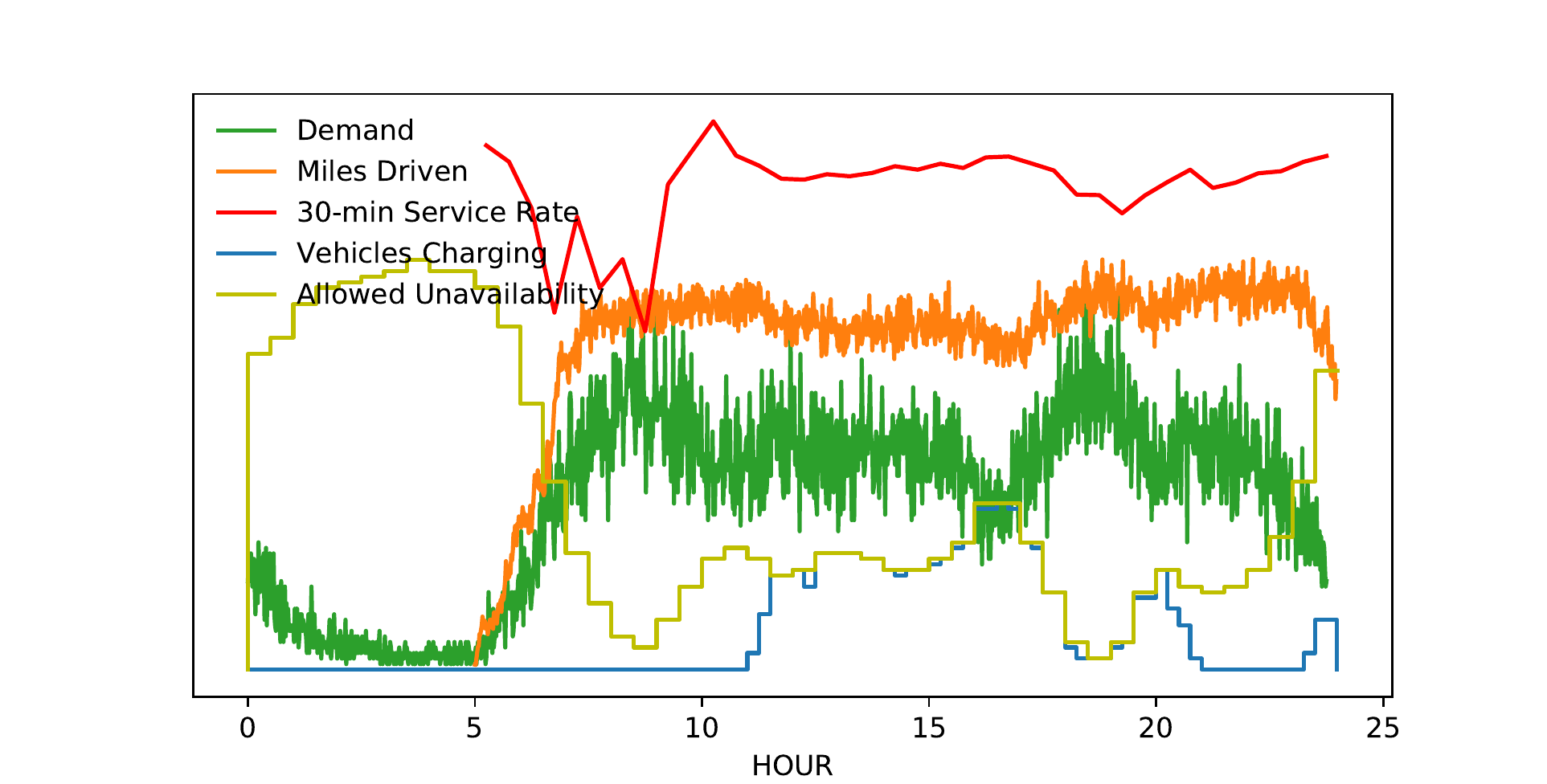}
  \caption{Heuristic method on 220 vehicles.  The fractional availability used, shown in blue, is defined to include both charging and the 15 minute period prior to charging in which it is assumed that vehicles lack full ability to serve customers due to constraints on reaching the charging station.  The maximum allowed offline availability, shown in yellow, indicates the limit by time of day of unavailable vehicles.}
  \label{figure:220_ordering}
\end{figure}

We can only assume the heuristic method preforms similarly to the MILP methods in larger cases since the long horizon MILP cannot be solved in reasonable time.

\begin{table}
\centering
\caption{Results for simulations with 1000 Vehicles.  Compared to the benchmark, the heuristic method mitigated the loss in service rate compared to the upper bounding ICE baseline.}
\label{table:1000_results}
\csvautotabular{tables/2-1000-vehicles.csv}
\end{table}

\begin{table}
\centering
\caption{Results for simulations with 2000 Vehicles.  As with the experiments with fewer vehicles, the heuristic method has a service rate much closer to the upper bounding ICE than the benchmark algorithm.}
\label{table:2000_results}
\csvautotabular{tables/EXTRA-10-2000-vehicles.csv}
\end{table}

The results of the simulation for 1000 vehicles are shown in Table \ref{table:1000_results}.  Here again we see the heuristic method outperform the benchmark method and recover more than 50\% of the lost service rate as compared to the ICE vehicles simulation.  Vehicles in the benchmark method spent an average of 60 minutes waiting for an available charging station.  Figures (\ref{figure:1000_naive}) and (\ref{figure:1000_ordering}) show that a similar concentration of charging demand affected the benchmark method again, similar to the case with 220 vehicles.  Results for simulations with 2000 vehicles, shown in Table \ref{table:2000_results}, show the heuristic method similarly performing well with respect to the gap between the ICE method and benchmark.

\begin{figure}[h!t]
  \centering
  \includegraphics[scale=0.6]{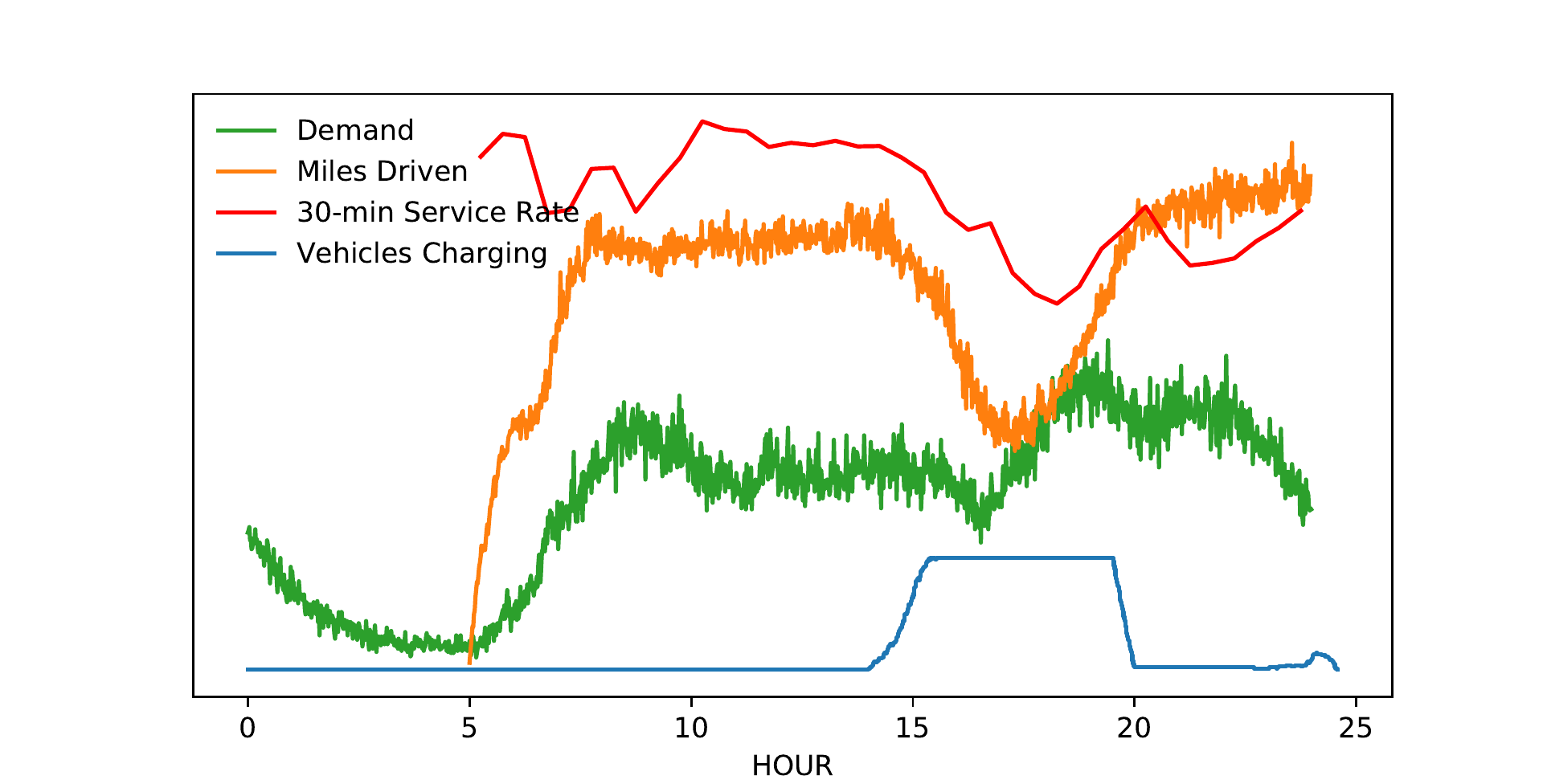}
  \caption{Benchmark method on 1000 vehicles.  This shows that the the vehicles all needed to charge at a similar time.  Some of the effect is hidden since only charging vehicles are counted in blue.  A larger number of vehicles are offline waiting to be charged during that time, only visible through the great reduction in vehicle miles traveled shown in orange.} 
  \label{figure:1000_naive}
\end{figure}

\begin{figure}[h!t]
  \centering
  \includegraphics[scale=0.6]{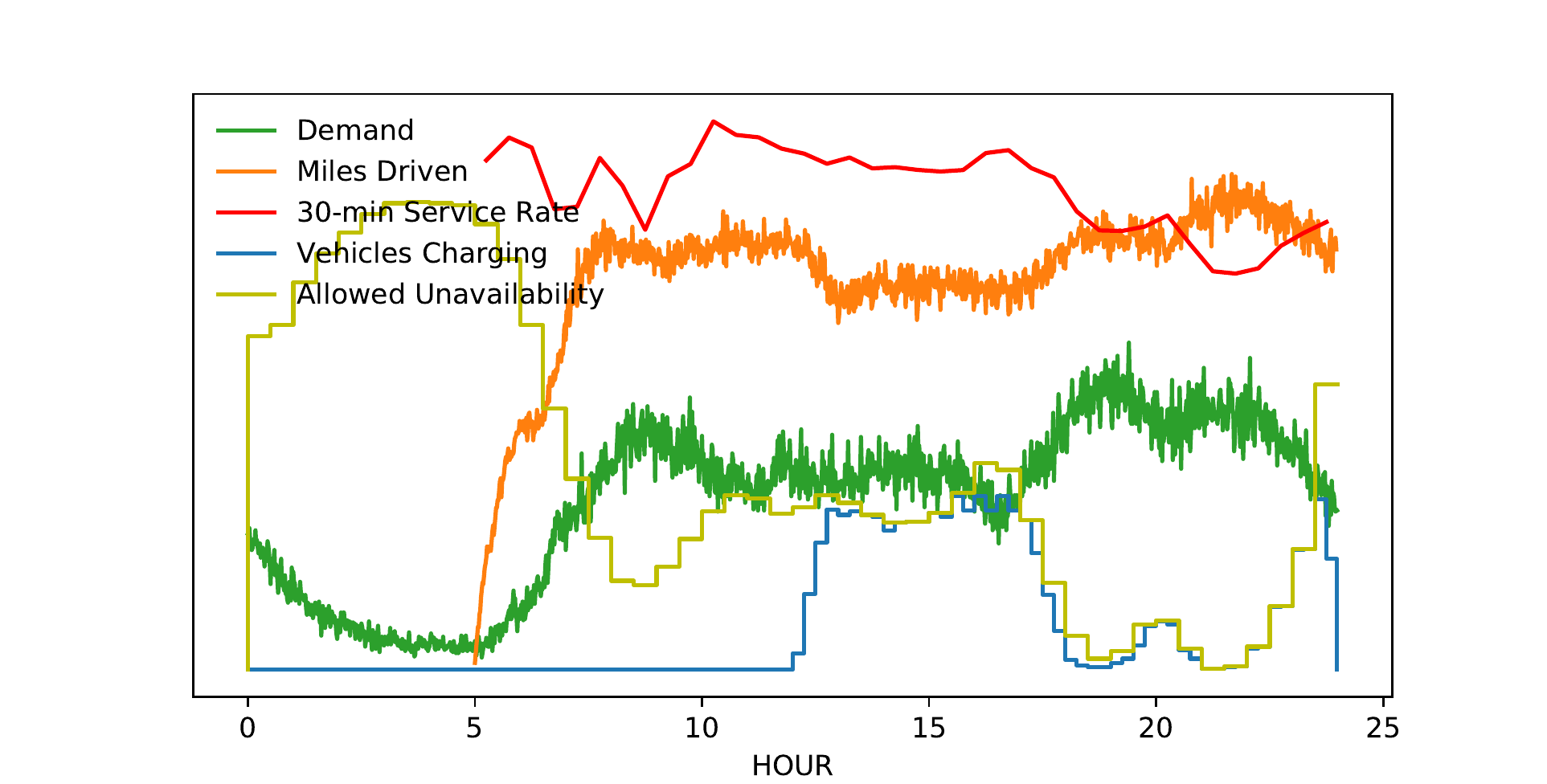}
  \caption{Heuristic method on 1000 vehicles.  This shows that the times the vehicles charged at were better spread, as evidenced by the small reduction in vehicles miles shown in orange.}
  \label{figure:1000_ordering}
\end{figure}

There are situations, however, where the performance of the benchmark method is comparable to that of the heuristic method.  As an example, consider again the simulation with 1000 vehicles and suppose they are initially fully charged at midnight.  The low levels of demand in the early morning do not require all vehicles be used and so the distribution of charges among vehicles is widely spread by 7am.  This has the effect of preventing vehicles from needing to charge at the same time.  In fact, on average vehicles only wait 5 and a half minutes before getting a charging assignment.  Figure (\ref{figure:1000_naive_midnight}) shows this.  Table \ref{table:betternaive} shows the results of some 220 and 1000 vehicle simulations starting at midnight.  But in practice we expect the fleet to charge overnight for the early morning and thus the simulations starting at midnight are likely not the best representative of an actual system.

In this particular case for 1000 vehicles, not only were the charging times well spread by a fortunate coincidence, most of the vehicles were in service during the peak demand period with charging levels slowly resuming as demand falls at the end of the day.

\begin{table}
\caption{Vehicle simulations from midnight.  When experiments start at midnight, the low demand in the early morning causes the distributions of charges in vehicles to spread.  As a result, the service rate of the benchmark method is similar to that of the heuristic method.}
\label{table:betternaive}
\centering
\csvautotabular{tables/3-midnight.csv}
\end{table}

\begin{figure}[h!t]
  \centering
  \includegraphics[scale=0.6]{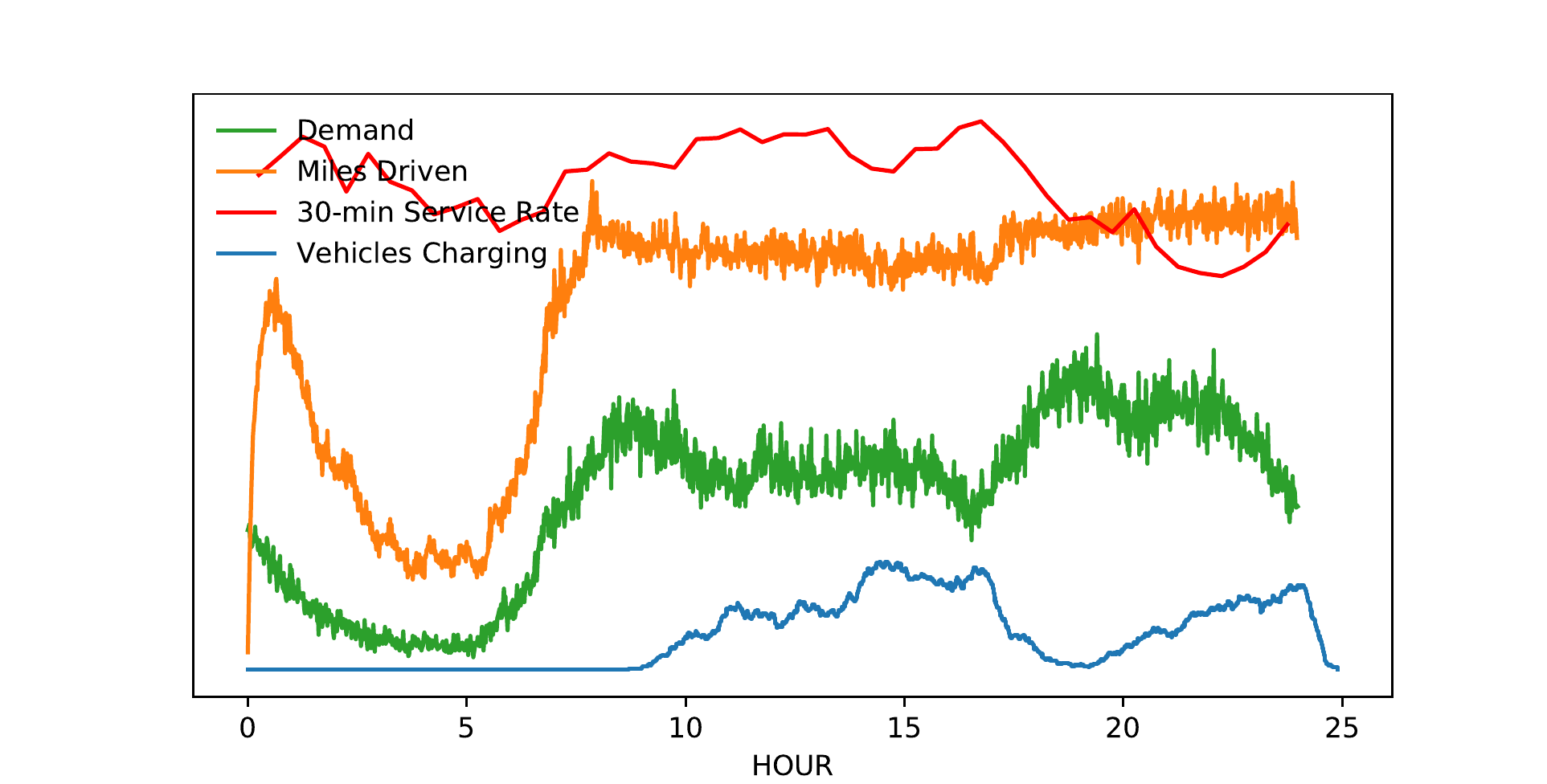}
  \caption{Benchmark method on 1000 vehicles starting at midnight.  By coincidence it turns out that the vehicles all finish charging before the day's peak demand, a fortunate gain for this particular run.}
  \label{figure:1000_naive_midnight}
\end{figure}

The computation times for the experiments are shown in Figure \ref{table:compute_time}.  As expected, the average computation time increased as the number of vehicles increased.  In the case of 220 vehicles, the per iteration cost using the MILP long horizon method took significantly more time since there were many iterations where the solver would spend the maximum allowed time performing branch and bound.  In general, the long horizon heuristic and benchmark method did not add significant computational overhead.  In fact, there is a counterintuitive effect where the benchmark method seems to be spending less time per iteration even compared to the ICE method.  The changes in time are primarily driven by changes in the length of time required to solve the passenger assignment problem, which is affected by the number of vehicles that are online and available at different times throughout the day.

\begin{table}
    \caption{Average Per Iteration Computation Time.  Computation time increases with number of vehicles.  MILP methods take a significant amount of time.  The heuristic, benchmark, and ICE methods use roughly the same time.  The changes in time are primarily driven by changes in the length of time required to solve the passenger assignment problem.  This is affected by the number of vehicles that are online and available at different times throughout the day and the relation of that to the number of requests that need to be assigned.}
    \label{table:compute_time}
    \centering
    \csvautotabular{tables/average_computation_time.csv}
\end{table}

\subsection {Sensitivity Analysis}

The results presented above require many assumptions about the practical setting in which a high capacity electric ridepool fleet might operate.  To add credence to the conclusions suggested by the results, here we will perturb a variety of the input parameters and discuss how the outputs change compared with the default settings.  First we vary the preferred availability schedule $A(t)$,  then we look at battery capacity, estimated range, number and location of charge stations, maximum distance the benchmark method can send vehicles to get to a charge station, and vehicle capacity.

Since the proposed methods rely heavily on the choice of the availability function, first we compare three different choices of $A(t)$: the preferred availability schedule, a uniform availability schedule, and an intentionally counterproductive schedule that emphasizes charging during peak periods.  For the heuristic method, the service rate is not very sensitive to the choice of availability function.  As shown in Tables \ref{table:220_rt_sensitivity} and \ref{table:1000_rt_sensitivity}, the impact of various choices of availability function yield similar results.  Choices that place too much availability at the wrong time of day do moderately worse.  All experiments had sufficient availability for the system to maintain fully charged vehicles.  This suggests that the majority of benefits of the heuristic charging method over the benchmark, which only served around 82.6\% and 83.4\% of requests in the case of 220 and 1000 vehicles, respectively, come from its ability to look ahead and appropriately spread the timing of vehicle charging.  As a secondary effect, choosing the availability function with consideration of demand patterns offer smaller, positive improvements.

\begin{table}
\caption{220 Vehicle Simulations for Availability Sensitivity.  Even in the worst case with an inverted schedule, the service rate is significantly higher than the benchmark method, at 82.6\% (see Table \ref{table:220_general_results}).}
\label{table:220_rt_sensitivity}
\centering
\csvautotabular{tables/4-220-availability.csv}
\end{table}

\begin{table}
\caption{1000 Vehicle Simulations for Availability Sensitivity.  While generating $R(t)$ using an inverted schedule yields the lowest service rate, it still only loses 0.6\% service rate compared to the typical model we developed in section \ref{subsec:availability}, much smaller than the 2.5-3\% gap between all of these results and the benchmark method.}
\label{table:1000_rt_sensitivity}
\centering
\csvautotabular{tables/5-1000-availability.csv}
\end{table}

Next we vary the battery range used in the simulations, whose default value was 180km.  In Table \ref{table:battery_distance}, simulation results are shown for three different battery capacities: 150km, 180km, and 210km.  In general, longer battery ranges resulted in better performance.  This is not surprising since we know that ICE vehicles, effectively vehicles with infinite charge, have the highest performance.  There are some unexpected trends in the data under the benchmark method.  While the general trend preferring longer battery range appears for the high and low end of the battery range, with the default 180km we do not get strictly better or worse performance than a longer or shorter battery capacity.  This is likely due to the effects of timing since the benchmark method is sensitive to the dictates of the timing of battery depletion.  If by coincidence the vehicles run out of charge during a period of lighter demand there is less impact than if they run out during high demand.  So in some sense, the battery capacity determines the time of day the vehicle runs out of power.  The heuristic methods seem less dependent on this and thus has fairly steady variations in performance when the range is perturbed.

\begin{table}
\caption{Vehicle Simulations Across Various Battery Ranges.  Longer battery range typically resulted in better service rate since vehicles spent less time on average offline.  The benchmark method did not strictly follow this trend, however, since it is also sensitive to the demand patterns at the time of day vehicles typically need to begin charging.}
\label{table:battery_distance}
\centering
\begin{adjustbox}{width = \textwidth}
\csvautotabular{tables/6-battery-ranges.csv}
\end{adjustbox}
\end{table}

Next, recall the heuristic method is also dependent on the estimate for the expected duration that vehicle batteries last for.  In Table \ref{table:battery_estimates} we compare the default 13 hour estimate for 220 vehicle and default 10 hour estimate for 1000 vehicles used in the simulations, as discussed in Section \ref{estimatedbattery}, with a variation of 2 hours, up and down.  The sensitivity appeared fairly low.  Although service rate increases with longer expected battery duration, recall that this means that it is more likely that vehicles will approach true zero charge, which is below what we consider to be zero in the planning model, based on the assumption of a buffer (Assumption \ref{assumption:qmin}).

\begin{table}
\caption{Analysis of Heuristic Method at Various Estimated Battery Lives.  Estimated battery life has a weak effect on service rate.}
\label{table:battery_estimates}
\centering
\csvautotabular{tables/8-battery-estimates.csv}
\end{table}

We consider sensitivity of the system to the placement of charge station infrastructure in two ways.
First, to test the sensitivity of the default charge station location layout, we compare the scenarios with 220 vehicles and 1000 vehicles to a layout made by a simple greedy algorithm that places stations of capacity one.  First we randomly place the first charge station, then subsequently place stations so that the each placed station maximizes the distance to the nearest placed charge station.  The results are shown in Tables \ref{table:220_stations} and \ref{table:1000_stations}, which order the result in the same order as they were presented for the default charge station placement scenarios shown in Tables \ref{table:220_general_results} and \ref{table:1000_results}, respectively.  Generally, the performance of the heuristic and MILP methods we not affected by the change in charge station placement; in the case of 220 vehicles the service rate stayed roughly the same or increased marginally, in the case of 1000 vehicle the service rate decreased marginally.  The benchmark method had a slightly larger change benefiting by about 0.6\% in the case of 220 vehicles and losing about 0.5\% in the case of 1000 vehicles.  In all cases, the heuristic and MILP methods outperformed the benchmark method.

\begin{table}
\caption{Results for 220 vehicles when charge stations are placed using a greedy algorithm.  The benchmark method increased in performance by about 0.6\%, but the heuristic and MILP methods are largely unchanged.  See Table \ref{table:220_general_results} to compare these results with the default charge station layout.}
\label{table:220_stations}
\centering
\csvautotabular{tables/11-220-altstations.csv}
\end{table}

\begin{table}
\caption{Results for 1000 vehicle when charge stations are placed using a greedy algorithm.  Both the heuristic and benchmark methods have lower service rate when compared to the default charge station layout, shown in Table \ref{table:1000_results}, with the benchmark method impacted slightly more.}
\label{table:1000_stations}
\centering
\csvautotabular{tables/12-1000-altstations.csv}
\end{table}

Second, we consider the sensitivity of the methods to the number of available charge stations.  In the above experiments there were 101 stations available for the 1000 vehicles.  In Table \ref{table:reduced_station_count} we compare the outputs from the experiments had that number been reduced to 90 charge stations, using the same procedure as outlined in section \ref{chargestationlocation} for placement.  The heuristic method is hardly affected by this change while the benchmark method looses more than a full percent of service rate under reduced stations.  This is expected since the greedy assignment algorithm will not be able to give all vehicles that need to charge immediately available assignments while the heuristic can look ahead and avoid congestion situations all together.

\begin{table}
\caption{1000 Vehicle Simulations with Reduced Station Count.  Reducing the number of charging stations has a larger impact on the service rate of the benchmark algorithm than the heuristic method.}
\label{table:reduced_station_count}
\centering
\csvautotabular{tables/7-1000-reduced-stations.csv}
\end{table}

Next, we look at a parameter the benchmark method has for the maximum distance between a vehicle's position and the charge station it is assigned to.  This parameter controls the trade off between keeping vehicles near areas with high demand and making sufficient charging options available.  We use a default of 900 seconds of travel time, but we compare this with 600 second and 1200 second options.  The choice of 900 seconds outperformed both the 600 and 1200 second bound.
See Table (\ref{table:naive_distances}).

\begin{table}
\caption{Comparison of Benchmark Method Maximum Charge Station Travel Time.  Allowing the greedy station assignment algorithm to assign vehicles to stations nearer or farther away did not change the service rate enough to be competitive with the heuristic method.  See Tables \ref{table:220_general_results} and \ref{table:1000_results}.}
\label{table:naive_distances}
\centering
\csvautotabular{tables/9-travel-time.csv}
\end{table}

Finally, we consider the sensitivity of the system to our choice of using capacity 10 vehicles by comparing against scenarios where the capacity is reduced to 4.  We compared the vehicle capacity across all vehicles counts, shown in Tables \ref{table:cap4_220}, \ref{table:cap4_1000}, and \ref{table:cap4_2000}.  As expected, the service decreases when vehicle have lower capacity.  The impact of lower capacity was strongest among simulations with larger fleet sizes.  The general relationships between the ICE, MILP, heuristic, and benchmark methods appears to be the same in the capacity 4 case as it was in the default capacity 10 case.

\begin{table}
\caption{Results for 220 vehicles when capacity is reduced to 4.  Compare to Table \ref{table:220_general_results} which shows the same simulations when vehicle capacity is 10.  The ICE method suffers little impact from the change in capacity, but the other methods lose around a percentage point of service rate.}
\label{table:cap4_220}
\centering
\csvautotabular{oldtables/1-220-vehicles.csv}
\end{table}

\begin{table}
\caption{Results for 1000 vehicles when capacity is reduced to 4.  Compare to Table \ref{table:1000_results} which shows the same simulations when vehicle capacity is 10.  Roughly all methods in this case were negatively impacted by the lower capacity, typically losing about two and a half percentage points of service rate.}
\label{table:cap4_1000}
\centering
\csvautotabular{oldtables/2-1000-vehicles.csv}
\end{table}

\begin{table}
\caption{Results for 2000 vehicles when capacity is reduced to 4.  Compare to Table \ref{table:2000_results} which shows the same simulations when vehicle capacity is 10.  All methods were negatively impacted by the lower capacity, with the ICE method taking the small hit to service rate with a 3.4\% loss, and the benchmark method taking the worst hit with a 4.5\% loss in service rate.}
\label{table:cap4_2000}
\centering
\csvautotabular{oldtables/EXTRA-10-2000-vehicles.csv}
\end{table}

In summary, the sensitivity analysis has shown that changes in the input parameters can make marginal changes to the performance of the system in terms of service rate.  However, despite the differences in performances under the various perturbations a few features remained constant.  First, the ICE method always served as an upper bound for the performance of any system that required a charging scheme.  Second, the benchmark method was always outperformed by the MILP and heuristic methods, supporting the intuition that algorithms that foresee future conditions can be beneficial in electric ridepool systems.

\section {Conclusions}

Results from experiments demonstrated that high-capacity ridepool systems can be run at scale with electric vehicles in real time.  Unlike the case with ICE vehicles which do not need refueling over the course of the day, the requirement for electric vehicles to recharge added a new set of constraints to the already hard problem of high-capacity ridepooling.  This work presented a two-phase method that allowed the operator to manage the fleet by separating the problem of passenger assignment and charge scheduling into separate optimization procedures.  While our method required estimating new parameters, such as estimates of daily vehicle availability requirements, reasonable estimates we ran experiments on performed similarly and well suggesting such an approach may work well in a real-world setting.

The benefits of the methods presented are significant since strategies that act purely online without making any estimates of future demand, such as the benchmark method studied here, pose risk when the distribution of charges is initially uniform across vehicles.  These risks are most prominent when the entire fleet is fully charged at the beginning of the work day, though experimental results showed that night service can sometimes mitigate this risk by causing the charge in vehicles to spread.  The spread caused by night time service is different from regular service as the variability is not caused by the vehicles traveling on roads with different speeds but rather by the large portion of downtime experienced by the vehicles.  In cases when night traffic is especially low, as seen in the experiments with 220 vehicles, this effect is smaller.  In realistic settings, operators would likely use dynamic fleet sizing which would concentrate utilization of vehicles and again raise similar risks as with the fully charged fleet.

Although our experiments assumed the distribution of vehicle miles traveled is uniform over the course of the day, it may be beneficial not to make this assumption.  For example, if one were to operate a system 24 hours a day, the night time may be better accounted for with lower expected battery usage.  This change can easily be implemented in our framework since expected battery usage is entirely accounted for per-period; one could estimate and utilize separate $q_{est, t}$ in the long horizon MILP for each time period.

If the operator rents rather than owns charge station capacity, the model can further be adapted to allow for different numbers of vehicles to be assigned at different times of day.  This implementation has the caveat that the long horizon problem will only take into account whether the schedule is feasible for the total number of vehicles charging, not whether the schedule is feasible for the particular set of stations that are available at different times.  For example, suppose that before noon the operator has access to one set of stations and then after noon they have access to a mutually exclusive set of stations.  The solution to the long horizon problem may still schedule a vehicle to charge from 11:45 until 12:15 since sufficient capacity seems to exist, but it is not feasible since no station assignment could accomplish that.  While this problem can be avoided in both the heuristic and MILP formulations by forbidding scheduling vehicle times that cross certain time boundaries, this solution is too strong for most cases since it is rare that the time constraint would apply to all charging stations at a single time step.

In the future we would like to extend the model to allow for dynamic fleet sizing, where vehicles that do not need to be available are allowed to stop.  The model used in this paper assumes that vehicles must always be able to be in service unless they are charging.  But in practice it may be advantageous to have the capability to keep more vehicles online during busy hours while having fewer vehicles available during off peak hours.

\section{Acknowledgements}

The authors would like to thank Vindula Jayawardana for his help in coding the original simulator that was extended to produce the computational results in this paper.  This work was partially supported by the National Science Foundation (grant number 1839346) and the United States Department of Energy (grant number DE-EE0008464 subaward RQ19-119R06).

\bibliographystyle{acm} 
\bibliography{refs.bib}

\begin{thebibliography}{10}

\bibitem{al2020approximate}
{\sc Al-Kanj, L., Nascimento, J., and Powell, W.~B.}
\newblock Approximate dynamic programming for planning a ride-hailing system
  using autonomous fleets of electric vehicles.
\newblock {\em European Journal of Operational Research\/} (2020).

\bibitem{alonso2017demand}
{\sc Alonso-Mora, J., Samaranayake, S., Wallar, A., Frazzoli, E., and Rus, D.}
\newblock On-demand high-capacity ride-sharing via dynamic trip-vehicle
  assignment.
\newblock {\em Proceedings of the National Academy of Sciences 114}, 3 (2017),
  462--467.

\bibitem{bauer2019electrifying}
{\sc Bauer, G.~S., Phadke, A., Greenblatt, J.~B., and Rajagopal, D.}
\newblock Electrifying urban ridesourcing fleets at no added cost through
  efficient use of charging infrastructure.
\newblock {\em Transportation Research Part C: Emerging Technologies 105\/}
  (2019), 385--404.

\bibitem{bongiovanni2018two}
{\sc Bongiovanni, C., Kaspi, M., and Geroliminis, N.}
\newblock A two phase heuristic approach for the dynamic electric autonomous
  dial-a-ride problem.
\newblock In {\em hEART Conference\/} (2018).

\bibitem{chen2016operations}
{\sc Chen, T.~D., Kockelman, K.~M., and Hanna, J.~P.}
\newblock Operations of a shared, autonomous, electric vehicle fleet:
  Implications of vehicle \& charging infrastructure decisions.
\newblock {\em Transportation Research Part A: Policy and Practice 94\/}
  (2016), 243--254.

\bibitem{cheng2000single}
{\sc Cheng, T.~E., and Ding, Q.}
\newblock Single machine scheduling with deadlines and increasing rates of
  processing times.
\newblock {\em Acta Informatica 36}, 9-10 (2000), 673--692.

\bibitem{clewlow2017disruptive}
{\sc Clewlow, R.~R., and Mishra, G.~S.}
\newblock Disruptive transportation: The adoption, utilization, and impacts of
  ride-hailing in the united states.

\bibitem{nycdata}
{\sc {[dataset] New York City Taxi \& Limousine Comission}}.
\newblock Trip record data, Accessed March 2019.

\bibitem{dornoff2020way}
{\sc Dornoff, J., Tietge, U., and Mock, P.}
\newblock On the way to" real-world" co2 values: The european passenger car
  market in its first year after introducing the wltp.
\newblock Tech. rep., International Council on Clean Transportation Europe,
  2020.

\bibitem{erhardt2019transportation}
{\sc Erhardt, G.~D., Roy, S., Cooper, D., Sana, B., Chen, M., and Castiglione,
  J.}
\newblock Do transportation network companies decrease or increase congestion?
\newblock {\em Science advances 5}, 5 (2019), eaau2670.

\bibitem{drivingelectric}
{\sc Errity, S.}
\newblock New electric vans 2020 and beyond.
\newblock
  \url{https://www.drivingelectric.com/news/622/new-electric-vans-2020-and-beyond},
  2020.

\bibitem{farhan2018impact}
{\sc Farhan, H., and Donna~Chen, T.}
\newblock Impact of ridesharing on operational efficiency of shared autonomous
  electric vehicle fleet.
\newblock {\em Transportation Research Part C: Emerging Technologies 93\/}
  (2018), 310--321.

\bibitem{fehn2019modeling}
{\sc Fehn, F., Noack, F., and Busch, F.}
\newblock Modeling of mobility on-demand fleet operations based on dynamic
  electricity pricing.
\newblock In {\em 2019 6th International Conference on Models and Technologies
  for Intelligent Transportation Systems (MT-ITS)\/} (2019), IEEE, pp.~1--6.

\bibitem{fiori2016power}
{\sc Fiori, C., Ahn, K., and Rakha, H.~A.}
\newblock Power-based electric vehicle energy consumption model: Model
  development and validation.
\newblock {\em Applied Energy 168\/} (2016), 257--268.

\bibitem{froger2017new}
{\sc Froger, A., Mendoza, J.~E., Laporte, G., and Jabali, O.}
\newblock New formulations for the electric vehicle routing problem with
  nonlinear charging functions.
\newblock {\em Centre interuniversitaire de recherche sur les reseaux
  d'entreprise, la logistique et le transport (CIRRELT)\/} (2017).

\bibitem{han2017quantification}
{\sc Han, S., Topcu, U., and Pappas, G.~J.}
\newblock Quantification on the efficiency gain of automated ridesharing
  services.
\newblock In {\em 2017 American Control Conference (ACC)\/} (2017), IEEE,
  pp.~3560--3566.

\bibitem{iacobucci2019optimization}
{\sc Iacobucci, R., McLellan, B., and Tezuka, T.}
\newblock Optimization of shared autonomous electric vehicles operations with
  charge scheduling and vehicle-to-grid.
\newblock {\em Transportation Research Part C: Emerging Technologies 100\/}
  (2019), 34--52.

\bibitem{kohani2017location}
{\sc Koh{\'a}ni, M., Czimmermann, P., V{\'a}{\v{n}}a, M., Cebecauer, M., and
  Buzna, L.}
\newblock Location-scheduling optimization problem to design private charging
  infrastructure for electric vehicles.
\newblock In {\em International Conference on Operations Research and
  Enterprise Systems\/} (2017), Springer, pp.~151--169.

\bibitem{kolen2007interval}
{\sc Kolen, A.~W., Lenstra, J.~K., Papadimitriou, C.~H., and Spieksma, F.~C.}
\newblock Interval scheduling: A survey.
\newblock {\em Naval Research Logistics (NRL) 54}, 5 (2007), 530--543.

\bibitem{kullman2020dynamic}
{\sc Kullman, N., Cousineau, M., Goodson, J., and Mendoza, J.}
\newblock Dynamic ridehailing with electric vehicles.

\bibitem{lajunen2018evaluation}
{\sc Lajunen, A.}
\newblock Evaluation of energy consumption and carbon dioxide emissions for
  electric vehicles in nordic climate conditions.
\newblock In {\em 2018 Thirteenth International Conference on Ecological
  Vehicles and Renewable Energies (EVER)\/} (2018), IEEE, pp.~1--7.

\bibitem{lam2014electric}
{\sc Lam, A.~Y., Leung, Y.-W., and Chu, X.}
\newblock Electric vehicle charging station placement: Formulation, complexity,
  and solutions.
\newblock {\em IEEE Transactions on Smart Grid 5}, 6 (2014), 2846--2856.

\bibitem{lee2019shared}
{\sc Lee, U., Kang, N., and Lee, I.}
\newblock Shared autonomous electric vehicle design and operations under
  uncertainties: a reliability-based design optimization approach.
\newblock {\em Structural and Multidisciplinary Optimization\/} (2019), 1--17.

\bibitem{li2019agent}
{\sc Li, L., Lin, D., Pantelidis, T., Chow, J., and Jabari, S.~E.}
\newblock An agent-based simulation for shared automated electric vehicles with
  vehicle relocation.
\newblock In {\em 2019 IEEE Intelligent Transportation Systems Conference
  (ITSC)\/} (2019), IEEE, pp.~3308--3313.

\bibitem{lin2016electric}
{\sc Lin, J., Zhou, W., and Wolfson, O.}
\newblock Electric vehicle routing problem.
\newblock {\em Transportation Research Procedia 12\/} (2016), 508--521.

\bibitem{liu2019proactive}
{\sc Liu, Y., and Samaranayake, S.}
\newblock Proactive rebalancing and speed-up techniques for on-demand high
  capacity vehicle pooling.
\newblock {\em arXiv preprint arXiv:1902.03374\/} (2019).

\bibitem{loeb2018shared}
{\sc Loeb, B., Kockelman, K.~M., and Liu, J.}
\newblock Shared autonomous electric vehicle ({SAEV}) operations across the
  {Austin, Texas} network with charging infrastructure decisions.
\newblock {\em Transportation Research Part C: Emerging Technologies 89\/}
  (2018), 222--233.

\bibitem{lowalekar2019zac}
{\sc Lowalekar, M., Varakantham, P., and Jaillet, P.}
\newblock Zac: A zone path construction approach for effective real-time
  ridesharing.
\newblock In {\em Proceedings of the International Conference on Automated
  Planning and Scheduling\/} (2019), vol.~29, pp.~528--538.

\bibitem{ma2013t}
{\sc Ma, S., Zheng, Y., and Wolfson, O.}
\newblock T-share: A large-scale dynamic taxi ridesharing service.
\newblock In {\em 2013 IEEE 29th International Conference on Data Engineering
  (ICDE)\/} (2013), IEEE, pp.~410--421.

\bibitem{ma2014real}
{\sc Ma, S., Zheng, Y., and Wolfson, O.}
\newblock Real-time city-scale taxi ridesharing.
\newblock {\em IEEE Transactions on Knowledge and Data Engineering 27}, 7
  (2014), 1782--1795.

\bibitem{mock2014laboratory}
{\sc Mock, P., Tietge, U., Franco, V., German, J., Bandivadekar, A., Ligterink,
  N., Lambrecht, U., Kuhlwein, J., and Riemersma, I.}
\newblock From laboratory to road. a 2014 update of official and real-world
  fuel concumption and co2 values for passenger cars in europe.
\newblock Tech. rep., International Council on Clean Transportation Europe,
  2014.

\bibitem{onat2015conventional}
{\sc Onat, N.~C., Kucukvar, M., and Tatari, O.}
\newblock Conventional, hybrid, plug-in hybrid or electric vehicles?
  state-based comparative carbon and energy footprint analysis in the united
  states.
\newblock {\em Applied Energy 150\/} (2015), 36--49.

\bibitem{ota2016stars}
{\sc Ota, M., Vo, H., Silva, C., and Freire, J.}
\newblock Stars: Simulating taxi ride sharing at scale.
\newblock {\em IEEE Transactions on Big Data 3}, 3 (2016), 349--361.

\bibitem{pelletier2018charge}
{\sc Pelletier, S., Jabali, O., and Laporte, G.}
\newblock Charge scheduling for electric freight vehicles.
\newblock {\em Transportation Research Part B: Methodological 115\/} (2018),
  246--269.

\bibitem{pettit2019increasing}
{\sc Pettit, J.~F., Glatt, R., Donadee, J.~R., and Petersen, B.~K.}
\newblock Increasing performance of electric vehicles in ride-hailing services
  using deep reinforcement learning.
\newblock {\em arXiv preprint arXiv:1912.03408\/} (2019).

\bibitem{restrepo2014performance}
{\sc Restrepo, J., Rosero, J., and Tellez, S.}
\newblock Performance testing of electric vehicles on operating conditions in
  bogot{\'a} dc, colombia.
\newblock In {\em 2014 IEEE PES Transmission \& Distribution Conference and
  Exposition-Latin America (PES T\&D-LA)\/} (2014), IEEE, pp.~1--8.

\bibitem{rodier2018effects}
{\sc Rodier, C.}
\newblock The effects of ride hailing services on travel and associated
  greenhouse gas emissions.

\bibitem{santi2014quantifying}
{\sc Santi, P., Resta, G., Szell, M., Sobolevsky, S., Strogatz, S.~H., and
  Ratti, C.}
\newblock Quantifying the benefits of vehicle pooling with shareability
  networks.
\newblock {\em Proceedings of the National Academy of Sciences 111}, 37 (2014),
  13290--13294.

\bibitem{shaohan1988maximal}
{\sc Shaohan, M., and Wallis, W.}
\newblock Maximal-clique partitions of interval graphs.
\newblock {\em Journal of the Australian Mathematical Society 45}, 2 (1988),
  227--232.

\bibitem{shi2019operating}
{\sc Shi, J., Gao, Y., Wang, W., Yu, N., and Ioannou, P.~A.}
\newblock Operating electric vehicle fleet for ride-hailing services with
  reinforcement learning.
\newblock {\em IEEE Transactions on Intelligent Transportation Systems\/}
  (2019).

\bibitem{shi2018routing}
{\sc Shi, J., Gao, Y., and Yu, N.}
\newblock Routing electric vehicle fleet for ride-sharing.
\newblock In {\em 2018 2nd IEEE Conference on Energy Internet and Energy System
  Integration (EI2)\/} (2018), IEEE, pp.~1--6.

\bibitem{simonetto2019real}
{\sc Simonetto, A., Monteil, J., and Gambella, C.}
\newblock Real-time city-scale ridesharing via linear assignment problems.
\newblock {\em Transportation Research Part C: Emerging Technologies 101\/}
  (2019), 208--232.

\bibitem{van2000time}
{\sc Van~den Akker, J., Hurkens, C.~A., and Savelsbergh, M.~W.}
\newblock Time-indexed formulations for machine scheduling problems: Column
  generation.
\newblock {\em INFORMS Journal on Computing 12}, 2 (2000), 111--124.

\bibitem{viegas2016shared}
{\sc Viegas, J., Martinez, L., Crist, P., and Masterson, S.}
\newblock Shared mobility: Innovation for liveable cities.
\newblock In {\em International Transport Forum’s Corporate Partnership
  Board\/} (2016), pp.~1--56.

\bibitem{wallar2019optimizing}
{\sc Wallar, A., Alonso-Mora, J., and Rus, D.}
\newblock Optimizing vehicle distributions and fleet sizes for shared
  mobility-on-demand.
\newblock In {\em 2019 International Conference on Robotics and Automation
  (ICRA)\/} (2019), IEEE, pp.~3853--3859.

\bibitem{zhang2020charging}
{\sc Zhang, H., Sheppard, C.~J., Lipman, T.~E., Zeng, T., and Moura, S.~J.}
\newblock Charging infrastructure demands of shared-use autonomous electric
  vehicles in urban areas.
\newblock {\em Transportation Research Part D: Transport and Environment 78\/}
  (2020), 102210.

\end{thebibliography}

\end{document}